\DeclareMathOperator{\EX}{\mathbb{E}}
\DeclareMathOperator*{\argmax}{arg\,max}
\newcommand{\kibitz}[2]{\ifnum\Comments=1{\color{#1}{#2}}\fi}
\newcommand{\dcp}[1]{\kibitz{red}{[DCP: #1]}}
\definecolor{ao(english)}{rgb}{0.0, 0.5, 0.0}
\newcommand{\kibitzAdd}[2]{\ifnum\CommentsAdd=1{\color{#1}{#2}}\fi}
\newcommand{\dcpadd}[1]{\kibitzAdd{black}{#1}}
\title{Eliciting Social Knowledge for Creditworthiness Assessment}
\author{Mark York \\ Harvard University \\ markyork@g.harvard.edu \and Munther Dahleh \\ MIT \\ dahleh@mit.edu \and David C. Parkes \\ Harvard University \\ parkes@eecs.harvard.edu}
\begin{document}

\maketitle

\begin{abstract}
    	Access   to  capital is a major constraint for economic growth in the developing world. Yet those attempting to lend in this space face high defaults due to their inability to distinguish creditworthy borrowers from the rest. In this paper, we propose two novel scoring mechanisms that incentivize community members to truthfully report their signal on the creditworthiness of others in their community. We first design a truncated asymmetric scoring-rule for a setting where the lender has no liquidity constraints. We then derive a novel, strictly-proper VCG scoring mechanism for the liquidity-constrained setting. Whereas \citet{ChenYilingDMwG} give an impossibility result for an analogous setting in which sequential reports are made in the context of decision markets, we achieve a positive result through appeal to interim beliefs about the reports of others in a setting with simultaneous reports. Moreover, the use of VCG methods allows for the integration of linear belief aggregation methods.
\end{abstract}

\section{Introduction}
Access to capital has become the primary anti-poverty tool in development. Global microfinance grew from 13 million borrowers and \$7 billion in loans in 1995 \citep{KassimSalinaH.JHdri} to 140 million borrowers and \$129 billion in loans in 2019 \citep{microcreditsummit.org}. A particular challenge with microfinance is that the unbanked have minimal credit history, creating an information asymmetry problem between lenders and borrowers.

Muhammad Yunus launched microfinance in 1976 with the Grameen bank. They lend to groups of people who are jointly-liable to repay the loan. This creates self-selection based on community information \citep{Grameen}, but it also imposes significant cost on lenders and borrowers through bi-weekly meetings, the risk of default by fellow group members, and administration.

Another solution is the advent of data-analytics based lenders. These lenders typically give loans to individuals, and they leverage demographic or other information to select borrowers. Branch, operating in Kenya, requires users to own a smartphone with their app installed and runs analytics on the calls, text messages, emails, and other usage data from the phone. Based on the performance of past borrowers, these companies determine how likely a new potential borrower is to repay. Loans are as small as five USD, and interest rates start at 18\% monthly (199\% APR) ~\citep{Branch}. While this expands credit access, it excludes people who do not have smartphones and the interest rates are high. Another issue is that un-creditworthy borrowers learn which factors the algorithm considers, and they can modify their behavior to receive loans (e.g. \citep{BjorkegrenDaniel2020BRiM}).

Fortunately, research shows that community members are knowledgeable about the creditworthiness of people in their community. \citet{Maitra} deployed an agent-intermediated lending scheme in West Bengal, India through which they appointed agents to select borrowers and administer loans. These agents were compensated based on repayment rates, and the repayment rates were higher than those for group lending schemes in the same region. \citet{Hussam18} went one step further and deployed a commmunity-recommendation scheme employing the {\em Robust Bayesian Truth Serum} (RBTS) to reward recommenders for giving reports that conform closely to those of their peers~\citep{WitkowskiP12}. RBTS was found to partially nullify the incentives of recommenders to lie on behalf of family members. Of note, RBTS does not reward recommenders based on repayment outcomes.

In this paper, we propose a new information elicitation system that incentivizes community members to report their true beliefs about the likelihood that others will repay a loan.\footnote{An initial deployment of the scheme, conducted under Harvard University's IRB, is underway in Uganda with 100 agricultural borrowers, thanks to a partnership with Makere University and generous support from the {\em Global Challenges in Economics and Computation}.}
We handle lenders with a minimum profit threshold on the belief that a borrower will repay a loan and handle lenders with and without liquidity constraints that limit how many loans they can make. 

The goal is to support a lender who wants to make loans to the best borrowers, where this is defined according to the belief aggregation rule of a lender and its liquidity constraints and profit thresholds. We achieve this through incentive alignment so that recommenders will have strict incentives to prefer to report their true beliefs over other possible reports. This is the concept of {\em strict properness} from the scoring rule literature. Here, we seek strict properness in the {\em interim}, so that it holds for any information of a recommender (any belief about the likelihood of repayment of borrowers) and in expectation with respect to a  prior on the beliefs of other recommenders about the likelihood of repayment of borrowers.
%
%

The  main results  are the following: 
\begin{enumerate}
    \item  A mechanism, the truncated Winkler mechanism, that is strictly proper for a reasonable ({\em grain of no veto}) technical condition for a lender without liquidity constraints and multiple recommenders and multiple borrowers, and for a monotone non-decreasing belief aggregation rule. We show that the mechanism is not incentive compatible for a liquidity-constrained borrower.
\item     A mechanism, the  Vickrey-Clarke-Groves (VCG) scoring mechanism, that is strictly proper under reasonable conditions (reasonably uniform aggregation weights across recommenders),  for a lender with or without liquidity constraints and multiple recommenders and multiple borrowers, and for a weighted linear belief aggregation rule. 
\item The VCG scoring mechanism also aligns incentives with recommenders wanting to receive larger weights in the aggregator, and thus  higher quality predictions when this leads to higher weights over time. Moreover, the VCG scoring mechanism can be configured to ensure that all recommenders have non-negative utility from participation, whatever the outcome from making loans.
\end{enumerate}

\if 0

The main theoretical results are the following:
\begin{itemize}
    \item Binary decision scoring rules can be made strictly proper by using simultaneous reports and truncating the score such that expected scores are continuous across the decision threshold
    \item Our VCG scoring mechanism, with values determined by scores set by the system principal, is strictly-proper given reserve borrowers and a sufficient number of simultaneous recommendations
    \item In our VCG scoring mechanism, recommenders will strictly increase their expected score by increasing their weight, meaning that accuracy in the current round aligns with higher expected scores in future rounds
\end{itemize}

\fi

In regard to the first result, which makes use of the Winkler scoring rule, it is important that we use asymmetric scoring rules such that the minimum expected score  from the scoring rule is associated with the lender's threshold on minimum probability of repayment at which making a loan is profitable.  We develop the {\em grain of no veto} condition, which provides strict incentives, by reasoning about the interim utility and uncertainty faced by any given recommender.

While making use of the well known VCG mechanism, the application of the VCG scoring mechanism to this context is novel and  nonstandard. In particular, we use payments both in the normal sense of VCG {\em and also to construct the valuation functions of recommenders for different outcomes}. We make use  of outcome-contingent  payments to, in effect, give a recommender a  valuation functions for making a loan to a borrower that is proportional to the belief the recommender has as to the repayment probability of the borrower. By folding the typical VCG style payments on top,  we in effect take a constant (trivially proper but not strictly proper) scoring rule and generate an elicitation mechanism that is strictly proper. Moreover, the valuation functions are defined so that the  allocation rule of the mechanism corresponds to a belief aggregation model, and can embed  weights assigned by a belief aggregator to recommenders in an incentive-compatible way. This use of linear weighted belief aggregators corresponds naturally to well-studied belief aggregation systems \citep{SouleDavidAHfC}.

While our work is inspired by lending, these two mechanisms are broadly applicable in decision settings where interim uncertainty creates full support on the decision space from the agent perspective, and where outcomes are observed. Examples include employee screening, tenant screening, insurance underwriting, and service provider ratings.

The remainder of this paper is structured as follows. Section \ref{related_work} contains a brief literature review of prediction markets, decision markets, scoring rules and report aggregation. Section \ref{preliminaries_section} describes our recommendation gathering system and defines notation and key terms. Section \ref{winkler_section} discusses the unconstrained liquidity setting and the truncated Winkler mechanism. Section \ref{vcg_section} introduces the VCG scoring mechanism for the liquidity-constrained setting and gives an analysis of the incentive properties of the mechanism. 
Section \ref{conclusion_section} concludes and points to directions for future work.

\subsection{Related Work} \label{related_work}


One way to formulate the problem of gathering community lending recommendations \dcpadd{is as a {\em peer prediction} problem, i.e., as a problem of information elicitation without verification. The approach in peer prediction is to leverage correlation and mutual-information structure between reports to promote incentive alignment around true reports. A number of peer prediction mechanisms have been proposed, each requiring varying levels of knowledge on the part of the designer, on the kinds of reports, and on the task e.g. \citep{MillerNolan2005EIFT,WitkowskiP12,WitkowskiP12a,ShnayderAFP16,AgarwalMP017,JurcaF09,KongS19,RadanovicFJ16,Wang19,WaggonerC14}.  Indeed, this was the approach taken to belief elicitation for microfinance in~\citet{Hussam18}.

Our framing of the microfinance problem is that of information elicitation with verification, where a lender will observe whether or not a borrower makes a repayment or defaults in the future,} making this setting  well-suited for the methods of scoring rules, prediction markets, and decision markets.
{\em Scoring rules} are methods to elicit beliefs about uncertain future events where the outcome will be later observed~\citep{GneitingTilmann2007SPSR}. The basic framework is that a participant  reports her belief, $\hat{p}$ of the probability of an event. Her true belief is denoted as $p$. After the event is observed, participants are paid a reward (potentially negative) of $s(\hat{p},o)$, \dcpadd{where $o \in \mathcal{O}$ is the outcome of the event, $\mathcal{O}$ is a finite, exhaustive set of mutually-exclusive outcomes, and $s$ is a scoring rule mapping the report $\hat{p}$ and outcome $o$ to reward. A {\em proper scoring rule} is one in which the expected score from a truthful report is at least as great as the expected score from any non-truthful report, i.e.,

\begin{equation}
    \mathbb{E}_{o\sim p}[s(p,o)] \geq \EX_{o\sim p}[s(\hat{p},o)] \; ;\;\forall \hat{p} \neq p
\end{equation}

A {\em strictly proper scoring rule} replaces this inequality with a strict inequality.}
Common strictly proper scoring rules for the binary outcome case include the logarithmic scoring rule and the Brier or quadratic scoring rule. These are symmetric in the sense that the expected score when truthfully reporting is minimized at $p=0.5$ and  symmetric about that point. A useful modification is the class of {\em Winkler scoring rules} \citep{WinklerRobertL1994EPAS} which allow designers to set the minimum score point at any arbitrary location $c \in (0,1)$.

\dcpadd{{\em Prediction markets} can be used in a way that combines scoring rules with sequential elicitation from multiple participants, with the current market price reflecting the aggregate belief of the population about the outcome of an uncertain event~\citep{ChenP10}.  The market starts with an initial prediction $p_0 \in \Delta(\mathcal{O})$, where $\Delta(\mathcal{O})$ is the set of distributions on outcomes. In a prediction market with an automated market maker, for example the logarithmic market-scoring rule, agent $i$'s report $\hat{p}_i$ is in effect scored relative to the preceding report $\hat{p}_{i-1}$~\citep{hanson07}.}

Scoring rules and prediction markets are analyzed in a setting in which agents have no influence on whether information about whether an event is realized. This is not the case in the present paper, where agents' reports will determine who gets a loan. This brings us close to the 
{\em decision market } framework where a principal makes a decision  based on market prices~\citep{ChenYilingDMwG}. This creates new incentive challenges. For example, suppose there is  one loan to allocate and two  borrowers, with current market price 0.8 and 0.6 for each borrower, and an agent with corresponding beliefs  0.9 and 0.89. In a market-scoring rule context, the agent would have a higher expected payoff than truthful reporting  by leaving the price on borrower 1 unchanged and buying borrower 2 to a price of 0.89, so that the second borrower gets the loan.

\dcpadd{\citet{ChenYilingDMwG} provide a characterization of strict properness that requires randomization over decisions and full support on the set of possible decisions.  This can present a challenge to many applications, for example to decisions about  construction projects or the present context of loan decisions, and a hurdle to the real-world implementation of decision markets. As we discuss in Sections \ref{winkler_section} and \ref{vcg_section}, we provide a counterpoint to this requirement of  randomization with full support. 
The key difference is that the incentive analysis is conducted in the {\em interim} when an agent knows its own belief report but is uncertain about the belief reports of others, this uncertainty  maintained through  simultaneous reports. This interim uncertainty enables strict incentive alignment for a deterministic decision rule that lends to the set of borrowers that are most likely not to default.}

 VCG concepts have been used together with scoring rules but not in the way described here. \citet{PapakonstantinouAthanasios2011Mdft}  use a two-stage elicitation mechanism whereby agents participate in a second price procurement auction for information with the winner paid using a scoring rule based on the cost of the second-lowest-cost agent.  
%
Other work has considered  the problem of selection of  agents from within the same group as those who  provide yes/no approval information on others~\citep{AlonNoga2011Sous}. The goal is to select agents with the maximum number of approvals from others and there is no downstream observation of an uncertain event. Whereas we assume the recommenders is disjoint from the set of potential borrowers, \citet{AlonNoga2011Sous} focus on the incentive issues that arise when 
this is not the case. 

This work also relates to belief aggregation, which we briefly review in the Appendix.

\section{Preliminaries} \label{preliminaries_section}

A lender has a set of candidate borrowers $M=\{1,\ldots,m\}$ and recruits a set $N=\{1,\ldots,n\}$ of recommenders who know the candidate borrowers personally and provide reports on the $m$ borrowers. Each recommender $i\in N$ has a subjective belief $p_{iq} \in [0,1]$ of the likelihood with which candidate borrower $q$ will repay a loan. 
We write $p_i=(p_{i1},\ldots,p_{im})$, $p_q=(p_{1q},\ldots,p_{nq})$, and $p=(p_1,\ldots,p_n)$.
We also refer to  belief $p_i$ as the {\em type} of the recommender. 
We let ${\mathcal D}$ denote a {\em  prior} on beliefs, such that $p\sim {\mathcal D}$. We write $p_{-i}=(p_1,\ldots,p_{i-1},p_{i+1},\ldots,p_n)$, and write $p_{-i}\sim {\mathcal D}_{-i}$, marginalizing out over recommender $i$. We assume ${\mathcal D}$ and ${\mathcal D}_{-i}$ are common knowledge.

Recommender $i$ makes a {\em report $\hat{p}_{iq} \in [0,1]$} to the lender (principal), and we allow  $\hat{p}_{iq}\neq p_{iq}$. We use  $\hat{p}_i$ to denote the profile of all belief reports of recommender $i$. In the mechanisms that we design, recommenders make reports independently with no knowledge of other recommenders' reports. 
The {\em  repayment outcome} for a borrower $q$ who receives a loan is a binary variable,  $o_q \in \{0,1\}$, with $1$ representing  repayment and $0$ representing default.

The lender makes a decision about which borrowers will receive a loan. We assume that the lender has a {\em profit threshold} $c \in [0,1]$, such that the lender makes profit when making a loan where the repayment probability is $c$ or higher. 
The lender forms a {\em belief} about the repayment probability of a borrower $q\in M$ with an aggregation function $B_q(p_q)$, which represents the lender's belief where $p_q=(p_{1q},\ldots,p_{nq})$.
We assume this is weakly monotone increasing in $p_i$, for each $i$.
We also sometimes work with  a linear aggregator,  $B_q(p_q)=\sum_{i\in N}w_i p_{iq}$,  with {\em weight} $w_i>0$ on  recommender $i$ and $\sum_{i\in N}w_i=1$.
The lender also has a {\em liquidity constraint} $K\leq  m$, and if $K<m$ then can only make loans to a limited number of  borrowers.

\begin{definition}[Elicitation mechanism]
We design an {\em elicitation mechanism} $\mathcal{M}=(x,t,s)$:
\begin{enumerate}
    \item Elicit belief reports $\hat{p}=(\hat{p}_1,\ldots,\hat{p}_n)$ from  recommenders
    \item Determine the set of borrowers, $x(\hat{p})\in \{0,1\}^m$, that will receive a loan, such that $\sum_{q\in M}x_q(\hat{p})\leq K$, and define  two-part payments:
    \begin{enumerate}
        \item  An {\em immediate payment  } $t_i(\hat{p})\in \mathbb{R}$ {\bf made by each} recommender $i\in N$
        \item An {\em outcome-contingent payment} $s_{iq}(\hat{p}_{iq},o_q)\in \mathbb{R}$ {\bf made to each} recommender $i$ for each borrower $q\in a=x(\hat{p})$, i.e., for each borrower for which $a_q=1$.
    \end{enumerate}
\end{enumerate}
\end{definition}
\medskip

Given reports $\hat{p}$ and outcome profile $o=(o_1,\ldots,o_m)$, the {\em realized utility} to recommender $i$ is
\begin{align}
    u_i(\hat{p}_i,\hat{p}_{-i},o) = \sum_{q\in x(\hat{p})}s_{iq}(\hat{p}_{iq},o_q) - t_i(\hat{p}). 
\end{align}

Here, $\hat{p}_{-i}=(\hat{p}_1,\ldots,\hat{p}_{i-1},\hat{p}_{i+1},\ldots,\hat{p}_n)$.
%
%
The {\em utility} for recommender $i$ with belief $p_i$ is
\begin{align}
U_i(p_i,\hat{p}_i,\hat{p}_{-i})&=\sum_{q\in x(\hat{p})} \left( p_{iq}s_{iq}(\hat{p}_{iq},1) + (1-p_{iq})s_{iq}(\hat{p}_{iq},0)\right) - t_i(\hat{p})\\
&= \sum_{q\in x(\hat{p})} \mathbb{E}_{o_q\sim p_{iq}}[s_{iq}(\hat{p}_{iq},o_q)] - t_i(\hat{p}) \notag
\end{align}


This quantity is  ex post with respect to the reports of others, and takes an expectation over borrower outcomes  with respect to the beliefs of recommender $i$.
 It is useful to interpret the outcome-contingent payment to the recommender as inducing a term that plays a similar role as an agent's valuation in mechanism design, where $\mathbb{E}_{o_q\sim p_{iq}}[s_{iq}(\hat{p}_{iq},o_q)]$ is the recommender's ``value" for the lender's decision to lend to borrower $q$. 

\medskip

 There are a number of possible desiderata for an elicitation mechanism in this setting. 
\begin{itemize}
    \item  {\em Allocative efficiency} means that the mechanism allocates to the borrowers with the maximum probability of repayment amongst those better than the profit threshold $c$. For lender belief $B_q(\hat{p}_q)$, this requires that 
\begin{align} 
    x(p) \in  & \argmax_{a\in \{0,1\}^m}  \sum_{q \in M: B_q(p_q)>c} B_q(p_q) \times a_q \\
\mbox{s.t.} \quad &    \sum_{q\in M} a_q \leq K \notag
\end{align}
 

 \item {\em Weak ex post incentive compatibility} (weak EPIC) means that each recommender's utility is weakly maximized by reporting truthfully, regardless of the reports of other recommenders  (or \textit{ex post proper} in the language of the scoring rule literature, adapted here to also consider the effect of reports of others). This is 
    \begin{align}
       U_i(p_i,p_i,\hat{p}_{-i}) \geq U_i(p_i,\hat{p}_i,\hat{p}_{-i})\; ; \forall i, \ \forall p_i, \ \forall \hat{p}_i, \ \forall \hat{p}_{-i}
    \end{align}

    \if 0
    \item \textit{Strict Interim Incentives} means that, given full support on the distribution of possible types for other recommenders $D_{-i}$, a recommender who knows her type $p_i$ but not the type of others uniquely maximizes her expected utility by reporting truthfully. We define this as

    \begin{equation}
\begin{IEEEeqnarraybox}[][c]{l}
        \EX_{\hat{p}_{-i} \sim D_{-i}} [\EX_{o \sim p_i}[u_i(p_i, \hat{p}_{-i}, o)]] > \EX_{\hat{p}_{-i} \sim D_{-i}} [\EX_{o \sim p_i}[u_i(\hat{p}_i, \hat{p}_{-i}, o)]] ;\\
        \forall i, \ \forall p_i, \ \forall \hat{p}_i \neq p_i, \ \forall D_{-i}
\IEEEstrut\end{IEEEeqnarraybox}
\end{equation}
    \fi

 \item    {\em Strict ex post incentive compatibility} (strict EPIC) means that each recommender's utility is strictly maximized by reporting truthfully, regardless of the reports of other recommenders  (or  \textit{ex post strict proper} in the language of the scoring rule literature, adapted here to also consider the effect of reports of others). This is 
    \begin{align}
     U_i(p_i,p_i,\hat{p}_{-i}) > U_i(p_i,\hat{p}_i,\hat{p}_{-i})\; ; \forall i, \ \forall p_i, \ \forall \hat{p}_i \neq p_i, \ \forall \hat{p}_{-i}
    \end{align}
    
These incentive concepts are  ex post with respect to the reports of others and interim with respect to   a recommender's own belief on the outcome of making loans to borrowers, and defined before repayment outcomes are observed.

  \item {\em Strict interim incentive compatibility} (strict IIC) means that each recommender's interim utility, considering beliefs of others, is strictly maximized by reporting truthfully (or \textit{strict properness} in the language of the scoring rule literature). This is
   \begin{align}
        \mathbb{E}_{p_{-i}\sim {\mathcal D}_{-i}}[U_i(p_i,p_i,p_{-i})] > \mathbb{E}_{p_{-i}\sim {\mathcal D}_{-i}}[U_i(p_i,\hat{p}_i,p_{-i})]\; ; \forall i, \ \forall p_i, \ \forall \hat{p}_i \neq p_i
    \end{align}

\item  \textit{Ex post Individually Rational} (IR) means that all recommenders that make a truthful report have a non-negative expected utility once loans are allocated, but before repayment outcomes are observed.  This is
 \begin{align}
      U_i(p_i,p_i,\hat{p}_{-i})\geq 0\; ; \forall i, \ \forall p_i, \ \forall \hat{p}_{-i}. 
 \end{align}


 \item \textit{Strong ex post IR} means that all recommenders that make a truthful report have a non-negative realized utility even after repayment outcomes are observed. This is 
\begin{align}
    u_i(p_i, \hat{p}_{-i},  o) \geq 0\; ; \forall i, \ \forall p_i, \ \forall \hat{p}_{-i}, \ \forall o. 
    \end{align}
    
   The first concept of IR is  ex post with respect to the  reports of other recommenders. For this reason, we adopt the phrasing strongly ex post IR for the second notion, which holds once outcomes are observed.
    
    \end{itemize}

\section{Unconstrained-Liquidity Setting: Truncated Winkler Mechanism} 
\label{winkler_section}
In this section, we use strictly proper scoring rules to  design  elicitation mechanisms for the unconstrained liquidity setting, where the problem  decomposes  to make a separate decision for each borrower. 
We make use of the {\em  Winkler scoring rule}.   Relative to the otherwise more general VCG scoring mechanism, this approach can accommodate non-linear belief aggregation rules.

\subsection{One Recommender, One  Borrower} 
\label{1_rec_1_borrower}

\if 0
\begin{equation}
    S(p_{iq}, \hat{p}_{iq}) = \EX_{o_q \sim p_{iq}}[s(\hat{p}_{iq},o_q)] = p_{iq} s(\hat{p}_{iq},1) + (1-p_{iq}) s(\hat{p}_{iq},0)
\end{equation}

Note that this expected score is from the borrower's perspective, and does not reflect the expected score for the borrower given knowledge of the true repayment probabilities. The expected score under truthful reporting is $G(p_{iq}) = S(p_{iq}, p_{iq})$, or simply $G(p)$ (known in the literature as the generalized entropy). Note that the expected score and generalized entropy are calculated from the recommender's perspective, using their beliefs about repayment probabilities. You can see a visualization of this system in figure 1 below.

\begin{figure}
    \begin{center}
    	\includegraphics[width=12 cm]{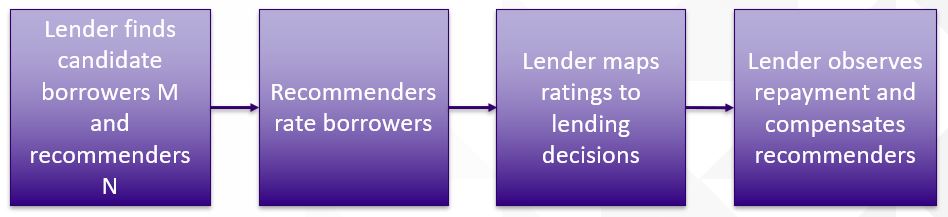}
        \caption{Diagram of report gathering, lending decisions and repyament}
    \end{center}
\end{figure}

In the remainder of this section, we discuss definitions and desiderata, we present our truncated Winkler scoring rule which is incentive-aligned, individually rational and proper for the one-borrower, one-recommender case, and we extend this scoring rule to the multiple-borrower, multiple-recommender case.

As described above, we use a scoring rule $s(\hat{p},o)$ to map reports and repayment outcomes to recommender incentives (with subscripts for recommender and borrower dropped since there is only one of each). The lender wants to use the report to lend only to the most qualified borrowers, so we use a threshold $c$ to decide whether the potential borrower should get a loan. The potential borrower receives a loan if $\hat{p} > c$, and she does not receive a loan otherwise.
\fi

We first consider the  case of a single recommender and a single  borrower, and  drop subscripts for notational simplicity. 
\dcpadd{Without care, this setup can be prone to the  problem outlined in~\citet{ChenYilingDMwG} for decision markets, and a recommender may be incentivized to misreport to ensure that the borrower receives a loan and  they have a chance of being compensated. To illustrate this, consider setting the immediate payment $t$ to zero, and using making use of a 
{\em truncated quadratic scoring rule} for the outcome-contingent payment:
\begin{equation}
\begin{IEEEeqnarraybox}[][c]{rcl}
s^{TQ}(\hat{p}, o) = \begin{cases} 
      2 \hat{p} - \hat{p}^2 - (1-\hat{p})^2 & \mbox{if $\hat{p} > c, o = 1$, and} \\
      
      2 (1 - \hat{p}) - \hat{p}^2 - (1-\hat{p})^2 & \mbox{if $\hat{p} > c, o = 0$.}\\
   \end{cases}
\IEEEstrut\end{IEEEeqnarraybox}
\end{equation}

The rule is ``truncated" because it only defines a score when report $\hat{p}>c$. The recommender receives no outcome-contingent payment when  
$\hat{p} \leq c$ and a loan is not made.}
\begin{figure} 
    \begin{center}
    	\includegraphics[width=9 cm]{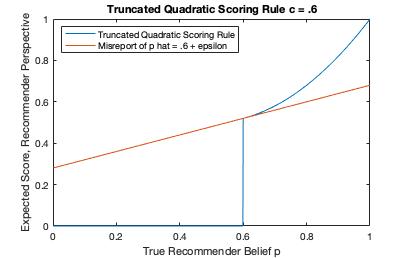} \\
        \caption{Expected utility for a truncated quadratic scoring rule under truthful reporting vs.~expected utility when reporting $\hat{p} = .6 + \epsilon$, for small $\epsilon>0$. The threshold is set to $c=0.6$.
        \label{quadratic_breakage_winkler}}
    \end{center}
\end{figure}

\dcpadd{Let $U(p)$  denote the {\em expected utility under truthful reporting}. For $p>c$ this is $\mathbb{E}_{o\sim p}[s^{TQ}(p,o)]$ and otherwise this is zero. 
See Figure \ref{quadratic_breakage_winkler}, for the case of $c=0.6$.  A recommender with true belief $p < 0.6$ can achieve a higher  utility 
by misreporting $\hat{p} = 0.6 + \epsilon$ for small $\epsilon>0$.
This is akin to the problem with incentive alignment in
decision markets.}

One idea to address this is to add an immediate payment $t(\hat{p})$,  set to compensate for the forfeited payment in the case that the report $\hat{p}\leq c$. For the quadratic scoring rule, we would set 
\begin{align} \label{quad_pmt_t}
    t^{TQ}(\hat{p})&=\left\{
    \begin{array}{ll}
   -[c^2 + (1-c)^2] & \mbox{if $\hat{p}\leq c$,}\\
     0 &\mbox{otherwise.}
    \end{array}
    \right.
\end{align}

We negate the quantity here, because the convention is that this is the payment made {\bf by} a recommender to the mechanism. 
In this way, there is continuity in the expected payment across this report threshold.
\if 0
For the quadratic scoring rule, the outcome-contingent payment would be
\begin{equation}
\begin{IEEEeqnarraybox}[][c]{rcl}
 s(\hat{p}, o) & \ = \ & \ \begin{cases} 
      2 \hat{p} - \hat{p}^2 - (1-\hat{p})^2 & \mbox{if $\hat{p} > c, o = 1$} \\
      2 (1-\hat{p}) - \hat{p}^2 - (1-\hat{p})^2 & \mbox{if $\hat{p} > c, o = 0$}
   \end{cases}
\IEEEstrut\end{IEEEeqnarraybox}
\end{equation}
\fi
However, this elicitation mechanism aligns incentives as long as threshold $c\geq  0.5$, but not otherwise. 

Figure~\ref{proper_truncated_quad_fig} illustrates the combined effect of the one-time and outcome-contingent payments, plotting $U(p)$. 
For $c=0.6$, $U(p)$ is convex, which is a sufficient condition for the rule to be proper~\citep{GneitingTilmann2007SPSR}. In particular, it is strictly convex and strictly proper for $p>c$ and weakly proper for $p\leq c$. The strictness for beliefs above $c$ ensures the lender will make the correct decision for any optimal report of the recommender. However, for  $c < 0.5$, $U(p)$ is non-convex and a recommender with belief $p \in (c,1-c)$ can obtain a higher  utility by reporting $\hat{p} \leq c$. This is because the non-truncated quadratic scoring rule $s^Q$ is symmetric with $G^Q(p) = G^Q(1-p)$, for $G^Q(p)=\mathbb{E}_{o\sim p}[s^{Q}(p,o)]$, and with a minimum  $G^Q(p)$ at $p = .5$. 

\begin{figure} 
    \begin{center}
    	\includegraphics[width=9 cm]{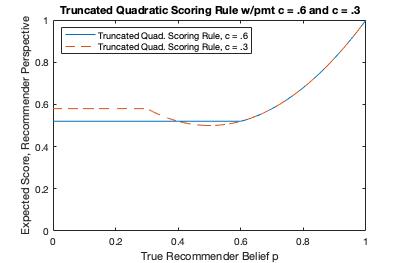} \\
        \caption{Expected utility for a truncated quadratic scoring rule together with immediate payment $t^{TQ}$  for threshold values $c = .6$ and $c = .3$. Note the properness when $c \geq 5$ and the lack of properness when $c < .5$ \label{proper_truncated_quad_fig}}
    \end{center}
\end{figure}

To avoid this, we  use an  asymmetric scoring rule whose minimum expected score under truthful reporting is at $p = c$. Let $s$ be any symmetric proper scoring rule, and consider some $c \in (0,1)$. The  {\em Winkler scoring rule}
$s^W$~\citep{WinklerRobertL1994EPAS}  is
\begin{equation}
    s^W(\hat{p}, o) = \frac{s(\hat{p}, o)-s(c, o)}{T(c,\hat{p})}, 
\quad \mbox{where}    \quad 
    T(c,\hat{p}) = \begin{cases} 
      s(0,0) - s(c,0) & \mbox{if $\hat{p} \leq c$} \\
      
      s(1,1) - s(c,1) & \mbox{otherwise.}  
   \end{cases}
   \end{equation}

The Winkler rule $s^W$ is (strictly) proper when $s$ is (strictly) proper. We choose to build $s^W$ from the logarithmic scoring rule, obtaining:
\begin{equation}
\begin{IEEEeqnarraybox}[][c]{rcl}
s^W(\hat{p}, o) & \ = \ & \begin{cases} 
      \frac{\ln(\hat{p})-\ln(c)}{ - \ln(c)} & \mbox{if $\hat{p} > c, o = 1$}\\
      \frac{\ln(1-\hat{p})-\ln(1-c)}{ - \ln(c)} & \mbox{if $\hat{p} > c, o = 0$.}
   \end{cases}
\IEEEstrut\end{IEEEeqnarraybox}
\label{eq:winkler12}
\end{equation}

\begin{figure}
    \begin{center}
    	\includegraphics[width=8 cm]{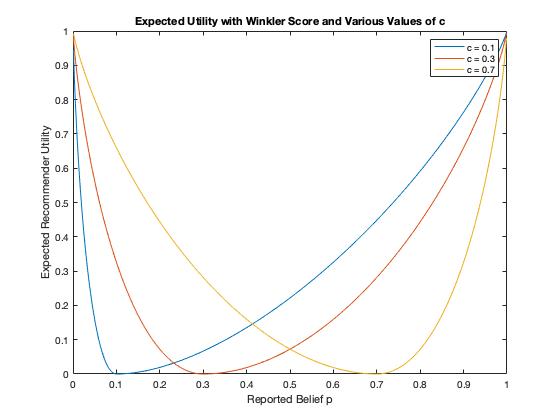} 
    	\includegraphics[width=8 cm]{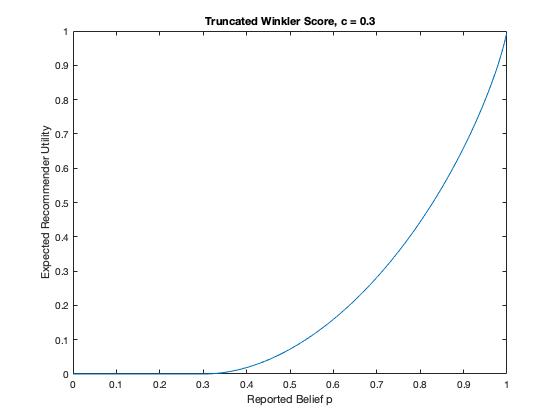} 
        \caption{{\bf Left}: Expected score $G^W(p)$ with truthful reporting under the Winkler Scoring Rule based on the Logarithmic Scoring Rule.  {\bf Right}: Expected utility $U(p)$ with truthful reporting for the truncated Winkler elicitation mechanism and lender threshold $c=0.3$. \label{truncated_winkler}}
    \end{center}
\end{figure}


Figure~\ref{truncated_winkler} (left) shows the   expected score  
$G^W(p)=\mathbb{E}_{o\sim p}[s^{W}(p,o)]$ from the Winkler rule. $G^W(p)=0$  for  $p=c$.
This leads to the following elicitation mechanism.
\begin{definition}[Truncated Winkler elicitation mechanism (1 recommender, 1 borrower)]
The {\em Truncated Winkler elicitation mechanism} for 1 recommender and 1 borrower  and lender profit threshold $c$ is defined as following:
\begin{itemize}
    \item Allocation: $x(\hat{p})=1$ if $\hat{p}>c$ and $x(\hat{p})=0$ otherwise
    \item Payment
    \begin{itemize}
        \item Immediate payment: zero
        \item Outcome-contingent payment: $s(\hat{p},o)=s^W(\hat{p},o)$ as per Eq.~\eqref{eq:winkler12}.
    \end{itemize}
\end{itemize}
\end{definition}

Figure~\ref{truncated_winkler} (right)  shows the  expected utility $U(p)$ from truthful reporting 
in a truncated Winkler mechanism  with $c=0.3$. The utility is convex and given by
\begin{equation} \label{single_winkler_mech}
\begin{IEEEeqnarraybox}[][c]{rcl}
U(p) & \ = \ & \begin{cases} 
      0 & \mbox{, if $p \leq c$} \\
      p\frac{\ln(p)-\ln(c)}{ - \ln(c)} + (1-p)\frac{\ln(1-p)-\ln(1-c)}{ - \ln(c)} & \mbox{otherwise.} 
   \end{cases}
\IEEEstrut\end{IEEEeqnarraybox}
\end{equation}

\if 0

%
\begin{figure} 
    \begin{center}
    	\includegraphics[width=12 cm]{figures/truncated_winkler_log_score.jpg} \\
        \caption{Expected score with truthful reporting under the Truncated Winkler Scoring Rule Based on the Logarithmic Scoring Rule. Shown here for threshold $c=0.3$. \label{truncated_winkler}}
    \end{center}
\end{figure}

\fi


 %
\begin{theorem}
For the one-borrower, one-recommender case, the truncated Winkler elicitation mechanism with lender profit threshold $c$
is ex post IR, and strict   proper
for beliefs $p > c$ and  proper for all beliefs.
\label{thm:1b1rwinkler}
\end{theorem}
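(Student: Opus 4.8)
The plan is to exploit that the immediate payment is identically zero, so that for a recommender with belief $p$ who reports $\hat p$ the expected utility is $U(p,\hat p)=\mathbb{E}_{o\sim p}[s^W(\hat p,o)]$ when $\hat p>c$ and $U(p,\hat p)=0$ when $\hat p\le c$ (the loan is declined, so there is no outcome-contingent payment and nothing is paid in); this extends the notation of Eq.~\eqref{single_winkler_mech}, where $U(p):=U(p,p)$. Write $S^W(p,q):=\mathbb{E}_{o\sim p}[s^W(q,o)]$, so the truthful utility is $U(p,p)=0$ for $p\le c$ and $U(p,p)=G^W(p)=S^W(p,p)$ for $p>c$. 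I would reduce all three assertions to a few elementary facts about the log-based Winkler rule~\eqref{eq:winkler12} and then split on whether $p\le c$ or $p>c$.

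The facts I would establish first are: (i) $s^W(c,0)=s^W(c,1)=0$, since the numerator of the Winkler rule vanishes at $\hat p=c$, so $S^W(p,c)=0$ for every $p$; (ii) $s^W$ is strictly proper on $[0,1]$, because Winkler's construction preserves strict properness of the underlying logarithmic rule (as noted below Eq.~\eqref{eq:winkler12}), and combining this with (i) — taking true belief $p$ and comparing the report $p$ against the report $c$ — gives $G^W(p)=S^W(p,p)\ge S^W(p,c)=0$, with strict inequality for $p\ne c$; and (iii) for $\hat p>c$ one has $s^W(\hat p,1)>0>s^W(\hat p,0)$, immediate from the signs of $\ln\hat p-\ln c$, of $\ln(1-\hat p)-\ln(1-c)$, and of $-\ln c$, so that $p\mapsto S^W(p,\hat p)=p\,s^W(\hat p,1)+(1-p)\,s^W(\hat p,0)$ is strictly increasing. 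Fact (ii) already yields ex post IR, since $U(p,p)=0$ for $p\le c$ and $U(p,p)=G^W(p)>0$ for $p>c$.

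For the properness claims, fix the true belief $p$. If $p\le c$: truthful reporting gives $U(p,p)=0$; any report $\hat p\le c$ also gives $0$; and any report $\hat p>c$ gives $S^W(p,\hat p)\le S^W(c,\hat p)$ by (iii), while $S^W(c,\hat p)<S^W(c,c)=0$ by strict properness of $s^W$ (true belief $c$, report $\hat p\ne c$ versus report $c$). So truthful reporting is optimal, but only weakly, since every report in $[0,c]$ ties at $0$ — which is exactly why strictness cannot be claimed when $p\le c$. If $p>c$: truthful reporting gives $U(p,p)=G^W(p)>0$ by (ii); any report $\hat p\le c$ gives $0<G^W(p)$; and any report $\hat p>c$ with $\hat p\ne p$ gives $S^W(p,\hat p)<S^W(p,p)=G^W(p)$ by strict properness of $s^W$ restricted to $(c,1]$ (there $T(c,\cdot)=-\ln c>0$ is constant, so dividing through preserves the inequality). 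Hence truthful reporting is the unique maximizer when $p>c$, giving strict properness; boundary reports cause no trouble, as $\hat p=0\le c$ and $\hat p=1$ gives expected score $-\infty$ whenever $p<1$.

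The step I expect to be the main obstacle is the case $p\le c$, $\hat p>c$: ruling out the decision-market-style manipulation in which a recommender with a low belief inflates the report past the threshold $c$ merely to make the loan happen and gain a chance at the outcome-contingent payment. This is precisely where the asymmetry of the Winkler rule does the work — fact (iii) together with $S^W(c,\cdot)$ being maximized at $q=c$ — and it is exactly what fails for a symmetric proper rule when $c<\tfrac12$, as the truncated-quadratic example earlier illustrates. Everything else is bookkeeping around the zero immediate payment and the continuity $G^W(c)=0$ across the threshold.
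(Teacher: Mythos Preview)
Your proof is correct. Both you and the paper follow the same skeleton---strict properness of the underlying Winkler rule $s^W$, the fact that $G^W(c)=0$ with $G^W(p)>0$ for $p\neq c$, and a case split on $p\le c$ versus $p>c$---but you diverge in the key subcase $p\le c$, $\hat p>c$. The paper handles it in one line by appealing to convexity of the truthful-utility curve $U(\cdot)$ together with $U(c)=0$; implicitly this uses that $p\mapsto S^W(p,\hat p)$ is the tangent line to the convex $G^W$ at $\hat p>c$, hence has positive slope and lies weakly below $G^W(c)=0$ at $p=c$, so is nonpositive for all $p\le c$. You instead argue directly and more elementarily: fact~(iii) gives that $p\mapsto S^W(p,\hat p)$ is strictly increasing, reducing the question to $S^W(c,\hat p)\le 0$, which follows from strict properness at true belief $c$ combined with fact~(i). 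Your route is self-contained and makes explicit exactly where the Winkler asymmetry does the work---namely $s^W(\hat p,0)<0$ for $\hat p>c$---which is precisely what fails for the symmetric quadratic when $c<\tfrac12$; the paper's route is terser but leaves the convexity step to the reader.
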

\begin{proof}
For a single recommender setting, strict ex post proper and ex post proper are equivalent to simply strict proper and proper, since there are no reports of others.
The expected score $G^W(p)$ from the Winkler rule under truthful reporting is strictly positive when $p>c$~\citep{WinklerRobertL1994EPAS}.  Since reports of $\hat{p} \leq c$ yield payment of $0$, risk-neutral agents will always prefer  $\hat{p} > c$ when $p > c$. Given that $s^W$ is strictly proper, the full mechanism is also strictly proper when $p > c$. When $p \leq c$, the convexity of $U(p)$ and  the fact that $U(c) = 0$ guarantees that 
%
a recommender weakly maximizes its expected payment from truthful reporting when  $p \leq c$. This gives properness for all beliefs. 
Individual rationality is immediate, since $U(p)\geq 0$.
$\blacksquare$
\end{proof}



\subsection{Multiple Recommmenders, Multiple  Borrowers}

We now consider the multiple recommenders, multiple borrowers case. Because liquidity is unconstrained, the lending decision decomposes easily across borrowers. 

A new aspect of the analysis is that with multiple recommenders we can  obtain strict interim IC (strict properness) by reasoning about the interim uncertainty a recommender has about the reports of other recommenders. We bring back subscripts, and  recommender $i$'s report on borrower $q$ is denoted  $\hat{p}_{iq}$ and borrower $q$'s repayment outcome is $o_q$. 
%


For borrower $q$, a  loan is made if and only if $B_q(\hat{p}_q)>c$.
In defining the outcome-contingent payment, we let $c_{iq}$ denote the {\em minimum report by recommender $i$ on borrower $q$  such a loan is made to $q$}, i.e.,
\begin{equation}
    c_{iq} \ \triangleq \  \underset{p' \in [0,1]}{\mathrm{inf}}p'\; \mathrm{ s.t.} \; B_q(\hat{p}_{iq}=p',\hat{p}_{-iq}) > c,
    \label{eq:ciq}
\end{equation}
where $p_{-iq}=(p_{1q},\ldots,p_{i-1,q},p_{i+1,q},\ldots,p_{nq})$. $c_{iq}$ is a function of the reports of others but we leave this implicit. By weak monotonicity of the belief aggregator, for any $\hat{p}_{iq}>c_{iq}$ we have that borrower $q$ receives a loan.
For example with a linear aggregator  $B_q(\hat{p}_q) = \sum_{i \in N} w_i \hat{p}_{iq}$, we have 
\begin{equation}
\label{ciqdfn}
    c_{iq} = \mathrm{min} (1, \mathrm{max} (0, \frac{1}{w_i}(c - \sum_{j \neq i} w_j \hat{p}_{jq}) ) ).
\end{equation}



Each recommender's outcome-contingent payment  follows the  Winkler scoring rule~\eqref{eq:winkler12}, with Winkler parameter $c$ replaced by $c_{iq}$. We let $s^W_{iq}(\cdot,\cdot)$ denote this modified Winkler scoring rule.
%

\begin{definition}[Truncated Winkler elicitation mechanism ($n$ recommenders, $m$ borrowers)]
The {\em Truncated Winkler elicitation mechanism} with unconstrained lender liquidity, lender profit threshold $c$, and monotone belief aggregation $B_q$, is defined as following:
\begin{itemize}
    \item Allocation: for each borrower $q$, $x_q(\hat{p})=1$ if $B_q(\hat{p}_q)>c$ and $x_q(\hat{p})=0$ otherwise
    \item Payment:
    \begin{itemize}
        \item Immediate payment: zero
        \item Outcome-contingent payment: for each borrower that receives a loan, $s_{iq}(\hat{p}_{iq},o_q)=s^W_{iq}(\hat{p}_{iq},o_q)$ as per Eq.~\eqref{eq:winkler12}, and with parameter $c$ in the Winkler rule set to $c_{iq}$ as per Eq.~\eqref{eq:ciq}.
    \end{itemize}
\end{itemize}
\end{definition}

\medskip

Lemma~\ref{winkler_properness_lemma} follows immediately from the proof of Theorem~\ref{thm:1b1rwinkler}, recognizing that the thresholds $c_{iq}$ play the role of $c$ (and allowing for the incentive and IR properties to be stated ex post, i.e., for any reports of others).
\begin{lemma} \label{winkler_properness_lemma}
   \sloppy For unconstrained liquidity, multiple borrowers and multiple recommenders, and a weak monotone-increasing aggregation function $B_q$, the truncated Winkler mechanism 
   is  ex post individually rational, and ex post proper for a report of recommender $i$ on borrower $q$ when $p_{iq} \leq c_{iq}$ and strictly ex post proper when $p_{iq} > c_{iq}$.
\end{lemma}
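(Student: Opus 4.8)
The plan is to reduce to the one-recommender, one-borrower analysis of Theorem~\ref{thm:1b1rwinkler}, using the fact that with unconstrained liquidity both the allocation and the payments decompose across borrowers. First I would fix an arbitrary report profile $\hat{p}_{-i}$ of the other recommenders; since $B_{q'}$ depends only on $\hat{p}_{q'}$, this pins down, for each borrower $q'$, the threshold $c_{iq'}$ of Eq.~\eqref{eq:ciq}, and $c_{iq'}$ is a function of $\hat{p}_{-i,q'}$ alone. I would then observe that recommender $i$'s report $\hat{p}_{iq}$ on borrower $q$ affects the mechanism only through (i) whether a loan is made to $q$, which by weak monotonicity of $B_q$ holds whenever $\hat{p}_{iq} > c_{iq}$, and (ii) recommender $i$'s own outcome-contingent payment $s^W_{iq}(\hat{p}_{iq},o_q)$ for borrower $q$; it changes neither the allocation nor any payment associated with a different borrower $q'\neq q$. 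Since the immediate payment is zero, this gives
\[
U_i(p_i,\hat{p}_i,\hat{p}_{-i}) \;=\; \sum_{q\in x(\hat{p})} \mathbb{E}_{o_q\sim p_{iq}}\bigl[s^W_{iq}(\hat{p}_{iq},o_q)\bigr],
\]
and, viewed as a function of the single coordinate $\hat{p}_{iq}$ with the belief $p_{iq}$ held fixed, the $q$-th summand is exactly the expected utility of the one-recommender, one-borrower truncated Winkler mechanism with lender threshold $c_{iq}$ in place of $c$ (namely zero when $\hat{p}_{iq}\leq c_{iq}$, and $\mathbb{E}_{o\sim p_{iq}}[s^W_{iq}(\hat{p}_{iq},o)]$ otherwise; note also $s^W_{iq}(c_{iq},\cdot)\equiv 0$, so nothing turns on the behavior exactly at the threshold).

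The rest is term-by-term optimization. Applying Theorem~\ref{thm:1b1rwinkler} with $c$ replaced by $c_{iq}$, the $q$-th summand is weakly maximized over $\hat{p}_{iq}$ by the truthful report $\hat{p}_{iq}=p_{iq}$ for every belief, and strictly maximized by $\hat{p}_{iq}=p_{iq}$ precisely when $p_{iq}>c_{iq}$. Summing over borrowers, the profile $\hat{p}_i=p_i$ weakly maximizes $U_i$; moreover any deviation in a coordinate $q$ with $p_{iq}>c_{iq}$ strictly lowers that summand (and leaves the others unchanged), hence strictly lowers $U_i$. This is exactly ex post properness, with strictness on borrower $q$ iff $p_{iq}>c_{iq}$, and it holds for every $\hat{p}_{-i}$, i.e., ex post. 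For ex post individual rationality, each summand evaluated at the truthful report is an instance of Eq.~\eqref{single_winkler_mech}, which is non-negative pointwise, so their sum is non-negative regardless of $\hat{p}_{-i}$.

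I do not expect a real obstacle here — the argument is bookkeeping on top of Theorem~\ref{thm:1b1rwinkler}, which is why the paper says the result follows immediately. The one place that needs explicit care is the decoupling step: that $i$'s report on $q$ is irrelevant to every other borrower's outcome and payment. This is precisely where unconstrained liquidity is used, and it is exactly what fails once a binding constraint $K<m$ couples the borrowers (a point the paper later exploits to show the mechanism is not incentive compatible under liquidity constraints), so I would be careful to state that $c_{iq}$ depends only on $\hat{p}_{-i,q}$. A minor secondary point is the degenerate case $c_{iq}=1$, where borrower $q$ is never funded irrespective of $\hat{p}_{iq}$, so coordinate $q$ contributes zero and only weak properness is asserted there — consistent with the statement.
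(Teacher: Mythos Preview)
Your proposal is correct and follows essentially the same route as the paper: the paper simply states that the lemma follows immediately from Theorem~\ref{thm:1b1rwinkler} by letting $c_{iq}$ play the role of $c$, and your argument fleshes this out by making the decomposition across borrowers explicit, using unconstrained liquidity and monotonicity of $B_q$ to reduce each coordinate to the single-recommender, single-borrower case. Your additional remarks on the decoupling step and the degenerate case $c_{iq}=1$ are accurate and more careful than the paper's own treatment, but the core idea is identical.
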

\if 0
\begin{proof}
The definition of the marginal threshold $c_{iq}$ ensures that borrower $q$ will receive a loan if and only if recommender $i$ reports above $c_{iq}$. The fact that $B_q(\hat{p}_q)$ is monotonically increasing ensures that the threshold is singular and decisive: $\hat{p}_{iq} > c_{iq} \Rightarrow B_q(\hat{p}_q) > c$. The problem then reduces to the single-recommender, single-borrower case, and we know from Theorem 1 that the truncated Winkler scoring mechanism is proper when $p_{iq} \leq c_{iq}$ and strictly-proper when $p_{iq} > c_{iq}$. As in Theorem 1 above, individual rationality follows from the non-negativity of $G$.
\end{proof}
\fi

We now state our first main theorem. For this, define {\em no veto} for $i$ and $q$ at threshold $c$ to mean $B_q(0,p_{-i,q})>c$, i.e., beliefs of others such that even with $p_{iq}=0$ from recommender $i$   the borrower will receive a loan under truthful reports,  and thus also with any  belief of recommender $i$ (by the monotonicity of $B_q$). 
\begin{definition}[Grain of no veto]
The   distribution on beliefs satisfies {\em a grain of no veto at  $c$} when 
$\mathbb{P}[p_{-i,q} \sim {\mathcal D}_{-i} : B_q(0,p_{-i,q}) > c] > 0$, for all recommenders $i$, all borrowers $q$, i.e., 
no veto is satisfied with non-zero measure of the type distribution.
\end{definition}

\begin{theorem} \label{B_q_strict_proper}
For unconstrained liquidity, multiple borrowers and multiple recommenders, and a weak monotone-increasing aggregation function $B_q$, the truncated Winkler mechanism is ex post IR, and also strictly interim IC  (strict proper)  when the distribution on beliefs satisfies grain of no veto.
\if 0

when for each recommender $i$, given full support on other recommenders' belief distribution $D_{-i}$
\begin{equation}
    \mathbb{P}_{\hat{p}_{-iq} \sim D_{-i}}[B_q(p_{iq},\hat{p}_{-iq}) > c] > 0 \quad \forall p_{iq}, \forall i, \forall q
\end{equation}
\fi
\end{theorem}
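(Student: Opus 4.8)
The plan is to leverage Lemma~\ref{winkler_properness_lemma}, which already gives ex post IR and ex post properness for every recommender $i$ and borrower $q$ (strict when $p_{iq} > c_{iq}$, weak when $p_{iq} \le c_{iq}$). Since the mechanism decomposes across borrowers (unconstrained liquidity) and the outcome-contingent payments $s^W_{iq}$ depend only on $\hat p_{iq}$ and the reports of others through $c_{iq}$, the interim expected utility of recommender $i$ factors as a sum over $q$ of $\mathbb{E}_{p_{-i,q}\sim{\mathcal D}_{-i}}[U^q_i(p_{iq},\hat p_{iq},p_{-i,q})]$, where $U^q_i$ is the single-borrower utility of Eq.~\eqref{single_winkler_mech} with threshold $c_{iq}$. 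It therefore suffices to show that for each fixed $q$, a truthful report $\hat p_{iq}=p_{iq}$ strictly maximizes this per-borrower interim expectation; summing over $q$ then yields strict interim IC, and ex post IR is inherited directly from Lemma~\ref{winkler_properness_lemma}.

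First I would fix $i$, $q$, and $p_{iq}$, and any candidate misreport $\hat p_{iq}\neq p_{iq}$. By Lemma~\ref{winkler_properness_lemma}, for every realization of $p_{-i,q}$ we have the ex post inequality $U^q_i(p_{iq},p_{iq},p_{-i,q}) \ge U^q_i(p_{iq},\hat p_{iq},p_{-i,q})$, with equality only possible when $p_{iq}\le c_{iq}(p_{-i,q})$ (in which case both sides may be $0$, or the weak-properness region gives equality). So the truthful report is interim-optimal; the only question is strictness, i.e., ruling out that the two interim expectations coincide. This reduces to exhibiting a subset of realizations $p_{-i,q}$ of positive ${\mathcal D}_{-i}$-measure on which the ex post inequality is strict.

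The key step is to use the grain-of-no-veto hypothesis to produce that positive-measure set. Grain of no veto at $c$ says $\mathbb{P}[p_{-i,q}\sim{\mathcal D}_{-i}: B_q(0,p_{-i,q})>c]>0$. On this event, $c_{iq}(p_{-i,q})=0$ (the infimum in Eq.~\eqref{eq:ciq} is achieved at $p'=0$ since even $\hat p_{iq}=0$ triggers the loan), hence $p_{iq} > c_{iq}=0$ whenever $p_{iq}>0$, and also $\hat p_{iq}\ge 0 = c_{iq}$ so the borrower receives a loan under both reports. In this regime the per-borrower mechanism is exactly the single-borrower truncated Winkler mechanism with threshold $0$, whose utility $U^q_i(p_{iq},\cdot,p_{-i,q})$ is the expected score of the underlying strictly proper (logarithmic-based) Winkler rule and is therefore \emph{strictly} maximized at the truthful report — so the ex post inequality is strict on this positive-measure event. (The one boundary case $p_{iq}=0$: then a truthful report yields the minimal Winkler score $G^W(0)=0$, but any misreport $\hat p_{iq}>c_{iq}$ gives a loan with strictly positive expected Winkler score penalty — more carefully, strict properness of $s^W$ at $c_{iq}=0$ still gives strict loss for $\hat p_{iq}\neq 0$; and $\hat p_{iq}=0$ is the truthful report. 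One should also handle $\hat p_{iq}$ small versus the truthful $p_{iq}$ when $c_{iq}>0$ on other events, but there the ex post inequality is at worst weak, which is all that is needed once strictness holds on the no-veto event.) Integrating, $\mathbb{E}_{p_{-i,q}}[U^q_i(p_{iq},p_{iq},p_{-i,q})] > \mathbb{E}_{p_{-i,q}}[U^q_i(p_{iq},\hat p_{iq},p_{-i,q})]$, and summing over $q$ completes the argument.

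The main obstacle I anticipate is the bookkeeping around the degenerate report $p_{iq}=0$ and, more generally, making sure the ``strict on a positive-measure set plus weak everywhere'' argument is airtight: one must confirm that on the no-veto event the misreport $\hat p_{iq}$ really does incur a strictly smaller expected Winkler score for \emph{every} $\hat p_{iq}\neq p_{iq}$ (including $\hat p_{iq}\le 0$ trivially excluded, and $\hat p_{iq}$ close to $p_{iq}$), which follows from strict properness of the logarithmic-based Winkler rule on $(0,1)$, and that the measure-zero ambiguities in the definition of $c_{iq}$ (the infimum, the strict inequality $B_q>c$) do not interfere. None of this is deep, but it is where the care is required.
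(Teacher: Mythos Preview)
Your proposal is correct and follows essentially the same approach as the paper: invoke Lemma~\ref{winkler_properness_lemma} for weak ex post properness everywhere, use the grain-of-no-veto event to obtain a positive-measure set on which $c_{iq}=0$ and hence the ex post inequality is strict, and integrate. If anything, you are more careful than the paper's own proof about the decomposition across borrowers and the boundary case $p_{iq}=0$.
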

\begin{proof} 
Per Lemma \ref{winkler_properness_lemma}, when $p_{iq} > c_{iq}$, the Winkler scoring mechanism is strictly ex post proper with regard to $i$'s report on $q$. Otherwise, it is  ex post proper. By the grain of no veto property, for  any belief of $i$ on $q$ $p_{iq}$, there is non-zero probability that $p_{iq}>c_{iq}$. This implies    $\mathbb{E}_{p_{-i}\sim {\mathcal D}_{-i}}[U_i(p_i,p_i,p_{-i})] > \mathbb{E}_{p_{-i}\sim {\mathcal D}_{-i}}[U_i(p_i,\hat{p}_i,p_{-i})]$, for all $\hat{p}_{-i}\neq p_i$.
IR follows immediately from Lemma~\ref{winkler_properness_lemma}.
\if 0

Let $U_{i,-q}$ be the expected utility to $i$ from all borrowers other than $q$. We then have

\begin{equation}
    U(p_i,\hat{p}_i,\hat{p}_{-i}) = U_{i,-q} + \mathbb{P}_{\hat{p}_{-iq} \sim D_{-i}}[B_q(p_{iq},\hat{p}_{-iq}) > c] s_{iq}(\hat{p}_{iq},o_q) 
\end{equation}

Thus, the expected utility is a constant plus a positive constant times a scoring rule that is strictly proper by Lemma \ref{winkler_properness_lemma}, which is itself a strictly proper scoring mechanism.
\fi
$\blacksquare$
\end{proof}


For a weighted linear aggregator, the grain of no veto condition requires (for all $i$, all $q$) that $\mathbb{P}[p_{-i,q} \sim {\mathcal D}_{-i} : \sum_{j\neq i} w_j p_{jq} > c] > 0$. That is, there is some non-zero probability that the weighted sum over reports of all but one recommender is large enough. 
As a special case, we can state a  corollary for the case that  beliefs have full support on $[0,1]$.
\begin{corollary}
For  unconstrained liquidity, multiple borrowers and multiple recommenders, and a belief distribution with full support, the truncated Winkler mechanism  
with the weighted linear aggregator
is ex post IR, and also strictly interim IC  (strict proper) for lender profit threshold $c$ when $\max_i w_i<1-c$, which is only possible when $c<(n-1)/n$.
\end{corollary}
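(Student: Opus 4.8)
The plan is to obtain the corollary as a direct application of Theorem~\ref{B_q_strict_proper}. The only thing I need to check is that, under a full-support prior together with the weight bound $\max_i w_i<1-c$, the weighted linear aggregator $B_q(\hat p_q)=\sum_{i}w_i\hat p_{iq}$ satisfies the grain of no veto condition at $c$; the ex post IR conclusion then transfers verbatim, since Lemma~\ref{winkler_properness_lemma} already establishes ex post IR for the truncated Winkler mechanism with any weak monotone aggregator.

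First I would specialize the grain of no veto condition to the linear aggregator, which (as noted in the text just before the corollary) reads $\mathbb{P}[p_{-i,q}\sim{\mathcal D}_{-i}:\sum_{j\neq i}w_j p_{jq}>c]>0$ for every $i$ and $q$. Fixing $i$ and $q$, I would evaluate the weighted sum at the profile in which every other recommender has belief $1$: there the sum equals $\sum_{j\neq i}w_j=1-w_i$, and the hypothesis $w_i\le\max_k w_k<1-c$ makes this strictly greater than $c$. Since the map $p_{-i,q}\mapsto\sum_{j\neq i}w_j p_{jq}$ is continuous (indeed linear), the strict inequality persists on a relatively open neighborhood $U\subseteq[0,1]^{n-1}$ of the all-ones profile; concretely one can take a box $\prod_{j\neq i}(1-\delta,1]$ with $\delta<(1-c-\max_k w_k)/\sum_k w_k$. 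Full support of ${\mathcal D}$ descends to full support of the marginal of ${\mathcal D}_{-i}$ on the coordinates $(p_{jq})_{j\neq i}$ (projections of a full-support measure have full support), so $\mathbb{P}[p_{-i,q}\in U]>0$. This establishes grain of no veto for every $(i,q)$, and Theorem~\ref{B_q_strict_proper} then yields both strict interim IC and ex post IR.

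It remains to justify the parenthetical feasibility remark. Here I would use the elementary fact that $\max_k w_k\ge\frac1n\sum_k w_k=\frac1n$ for any probability weights $w_k>0$ summing to $1$, so $\max_k w_k<1-c$ can hold only if $\frac1n<1-c$, i.e.\ $c<\frac{n-1}{n}$; conversely, whenever $c<\frac{n-1}{n}$ the uniform weights $w_k=1/n$ witness $\max_k w_k=1/n<1-c$, so the condition is attainable precisely on the range $c<(n-1)/n$.

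I do not expect a substantive obstacle: the argument is bookkeeping layered on top of Theorem~\ref{B_q_strict_proper}. The two points that deserve a careful sentence are (i) that full support of the joint prior yields positive probability for a product neighborhood of the all-ones profile in the relevant coordinates, and (ii) the continuity (or explicit-$\delta$) step that upgrades ``the all-ones profile strictly satisfies the half-space inequality'' to ``a positive-measure set does''. Everything else is immediate from the earlier results.
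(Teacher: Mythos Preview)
Your proposal is correct and follows essentially the same route as the paper: reduce to Theorem~\ref{B_q_strict_proper} by verifying grain of no veto via $\sum_{j\neq i}w_j=1-w_i>c$ at the all-ones profile and invoking full support, then deduce the feasibility bound from $\max_i w_i\ge 1/n$. Your write-up is somewhat more explicit (the open-neighborhood/continuity step and the converse witness via uniform weights), but the argument is the same.
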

\begin{proof}
 Since $\max_i w_i<1-c$, then $\sum_{j\neq i} w_j>c$, for all $j\in N$. From this, we have $\mathbb{P}[p_{-i,q} \sim {\mathcal D}_{-i} : \sum_{j\neq i} w_j p_{jq} > c] > 0$, by full support, and thus the grain of no veto condition. Moreover, since $\max_i w_i \geq 1/n$, we need $1/n<1-c$ and thus $c<(n-1)/n$.
\end{proof}

As the threshold increases the system needs more recommenders to provide strict properness and the lender becomes less able to put a very large weight on any single recommender. 

\if 0
\begin{corollary}
For  unconstrained liquidity, multiple borrowers and multiple recommenders,  the truncated Winkler mechanism  
with the weighted linear aggregator
is ex post IR, and also strictly interim IC  (strict proper) when

the distribution on beliefs satisfies grain of no veto.

the truncated Winkler mechanism is strictly proper (Strict IC) when for each recommender i,

\begin{equation}
    \frac{w_i}{\sum_{j \neq i} w_j} \leq \frac{1-c}{c}.
\end{equation}
\end{corollary}
\begin{proof}
Per Theorem \ref{B_q_strict_proper}, it is sufficient to show that $\mathbb{P}[c_{iq} = 0] > 0$ for a mechanism to be strictly proper. Using the full-support assumption and equation \ref{ciqdfn}, we consider the maximum weight ratio $\frac{w_i}{\sum_{j \neq i} w_j}$ with which we can observe $c_{iq} = 0$ in the case of $\hat{p}_{jq} = 1 \quad \forall j \neq i$. We begin by setting $c_{iq} = 0$

\begin{equation}
    c_{iq} = 0 = \frac{c\sum_{j \in N} w_j - \sum_{j \neq i} w_j}{w_i}
\end{equation}

\begin{equation}
    \Rightarrow \sum_{j \neq i} w_j = c\sum_{j \in N} w_j = c w_i + c \sum_{j \neq i} w_j
\end{equation}

\begin{equation}
    \Rightarrow (1-c)\sum_{j \neq i} w_j =  c w_i
\end{equation}

\begin{equation}
    \Rightarrow \frac{w_i}{\sum_{j \neq i} w_j} = \frac{1-c}{c}
\end{equation}

For smaller values of $w_i$, there exist smaller values of $\hat{p}_{-iq}$ such that $c_{iq} = 0$. Thus, given full support on the values of $\hat{p}_{-iq}$, we have non-zero probability that $c_{iq} = 0$. From Theorem \ref{B_q_strict_proper}, we have that the above $B_q(\hat{p}_q)$ is strictly proper when $\frac{w_i}{\sum_{j \neq i} w_j} \leq \frac{1-c}{c}$.
\end{proof}
\fi

%

\section{Constrained-Liquidity Setting: The VCG Scoring Mechanism} 
\label{vcg_section}

In this section, we give a mechanism that provides strict properness for the possibly constrained-liquidity case, i.e.,  handling  $K<m$ in addition to $K=m$. 
%
We first illustrate the failure of the truncated Winkler mechanism, and then introduce the VCG scoring mechanism that achieves strict properness together with a linear belief aggregator.

\subsection{Failure of the Truncated Winkler Mechanism}

The following theorem also implies that the Truncated Winkler mechanism  fails to be strict interim IC  under the grain of no veto property. 
\begin{theorem} \label{winkler_fail_theorem}
    In the constrained-liquidity setting with more than one recommender, the Truncated Winkler mechanism is not weakly ex post IC (ex post proper). 
\end{theorem}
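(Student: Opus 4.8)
The plan is to refute weak ex post IC by exhibiting a single instance in which one recommender strictly benefits from a misreport. I will work in the smallest constrained case: $m=2$ borrowers, liquidity $K=1$, $n=2$ recommenders, the equal-weight linear aggregator $B_q(\hat p_q)=\tfrac12(\hat p_{1q}+\hat p_{2q})$, and profit threshold $c=\tfrac12$. The mechanism is the liquidity-constrained version of the Truncated Winkler mechanism consistent with allocative efficiency: fund the single borrower with the largest $B_q(\hat p_q)$ among those with $B_q(\hat p_q)>c$ (and nobody if none clears $c$), with zero immediate payment and outcome-contingent payments $s^W_{iq}$ using parameters $c_{iq}$ as in Eq.~\eqref{eq:ciq}. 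Tie-breaking is irrelevant since the constructed instance has no ties.

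First I would choose recommender $2$'s reports and recommender $1$'s beliefs so that, under truthful reporting, both borrowers clear the threshold, borrower $2$ wins, and recommender $1$'s expected Winkler score is larger on the \emph{losing} borrower. Concretely I pick $\hat p_{21}$, $\hat p_{22}$, $p_{12}$ with $c<B_1(\hat p_1)<B_2(\hat p_2)$ under truth, and take $p_{11}$ close to $1$; for instance $\hat p_{21}=0.5$, $\hat p_{22}=0.9$, $p_{12}=0.7$, $p_{11}=0.99$ gives $B_1=0.745<B_2=0.8$, both above $\tfrac12$. Since $B_q(\hat p_q)>c$ is equivalent to $\hat p_{1q}>c_{1q}$, both $G^W_{11}(p_{11})$ and $G^W_{12}(p_{12})$ are strictly positive. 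I would then invoke the structure of the Winkler entropy built from the log rule: on $[c_{1q},1]$ it is convex with $G^W_{1q}(c_{1q})=0$ and $G^W_{1q}(1)=1$, so it lies below the chord, giving $G^W_{12}(p_{12})\le\frac{p_{12}-c_{12}}{1-c_{12}}<1$ whenever $p_{12}<1$, while $G^W_{11}(p_{11})\to 1$ as $p_{11}\to 1$. Hence $G^W_{11}(p_{11})>G^W_{12}(p_{12})$ for $p_{11}$ chosen large enough.

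Next I would analyze the deviation where recommender $1$ keeps $\hat p_{11}=p_{11}$ but reports $\hat p_{12}=0$ (any value small enough to push $B_2$ below $c$ suffices). Under truthful reporting recommender $1$'s utility equals $G^W_{12}(p_{12})$, since the immediate payment is zero and only borrower $2$ is funded. After the deviation $B_2<c$, so borrower $2$ is ineligible; $B_1$ is unchanged and still exceeds $c$, so borrower $1$ is funded and recommender $1$ receives $G^W_{11}(p_{11})$ — and crucially $c_{11}$ depends only on $\hat p_{21}$ and is unaffected. Thus the utility rises strictly from $G^W_{12}(p_{12})$ to $G^W_{11}(p_{11})$, and the deviating profile $\hat p_1=(p_{11},0)\neq p_1$, contradicting weak ex post properness. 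Embedding this profile in any prior $\mathcal D$ that retains grain of no veto but places positive mass near it then also kills strict interim IC, which is the advertised consequence.

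The routine steps are the arithmetic checks that both $B_q$ clear $c$ under truth and that the deviation drops $B_2$ below $c$. The one step needing a little care — and the only place where the ``$p_{11}$ near $1$'' choice is used — is the comparison $G^W_{11}(p_{11})>G^W_{12}(p_{12})$, which I would settle via convexity and the endpoint values of the Winkler entropy rather than by evaluating logarithms. Conceptually, the reason the theorem holds is that a binding liquidity constraint makes a recommender's report decide \emph{which} per-borrower scoring problems she is actually paid on, so she can steer the lone loan toward the borrower with the higher expected Winkler payoff; this is precisely the decision-market manipulation of \citet{ChenYilingDMwG} that the interim argument for the unconstrained case had avoided.
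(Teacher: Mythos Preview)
Your proposal is correct and follows the same strategy as the paper: exhibit a concrete profile where a recommender can profitably misreport so as to steer the single loan toward the borrower on which her expected Winkler score is higher. The paper does this with three recommenders and explicit numerical evaluation of the Winkler scores, whereas you use the minimal two-recommender setup and replace the numerics with a clean structural argument (convexity of $G^W$ with endpoints $0$ and $1$ forcing $G^W_{12}(p_{12})<\frac{p_{12}-c_{12}}{1-c_{12}}<1$, then continuity at $p_{11}\to 1$); these are cosmetic differences, and your version is arguably tidier.
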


\begin{proof} 
Consider three recommenders, two  borrowers, budget $K = 1$, lending threshold $c = .5$, and $s^W$ based on the logarithmic scoring rule. Suppose an aggregator that is a simple average of reports.  The beliefs are as in Table~\ref{tab:winklerfail}.  If all recommenders report truthfully, lender's belief will be $~.57$ and $.55$, for borrowers 1 and 2 respectively, and borrower 1 will be allocated. The expected utility (payment) of recommender 2  will be $p_{2,1} s^W(p_{2,1}, 1) + (1-p_{2,1}) s^W(p_{2,1}, 0) = .4 \frac{ln(.4) - ln(.2)}{-ln(.2)} + (1-.4) \frac{ln(1-.4) - ln(1-.2)}{-ln(.2)} = .4*.43 + .6*(-.18) = .07$. If, recommender 2 misreports $\hat{p}_{21}=0$, then the lender beliefs will be $.43$ and $.55$ for borrowers 1 and 2, and  borrower 2 will be allocated. In this case, recommender 2's expected utility will be $p_{2,2} s^W(p_{2,2}, 1) + (1-p_{2,2}) s^W(p_{2,2}, 0) = .85 \frac{ln(.85) - ln(.7)}{-ln(.7)} + (1-.85) \frac{ln(1-.85) - ln(1-.7)}{-ln(.7)} = .85*.54 + .15*(-1.94) = .17$. Similar examples can be constructed for any number of recommenders greater than one. $\blacksquare$
\end{proof}

\begin{table}
\begin{center}
     \begin{tabular}{|| c | c | c | c |} 
         \hline
          & Recomender 1 & Recommender 2 & Recommender 3 \\ [0.5ex] 
         \hline\hline
         Belief on Borrower 1 & .7 & .4 & .6 \\ 
         \hline
         Belief on Borrower 2  & .4 & .85 & .4 \\
         \hline
        Borrower 1 threshold $c_{i1}$ & .5 & .2 & .4 \\
         \hline
      Borrower 2 threshold   $c_{i2}$ & .25 & .7 & .25 \\
         \hline
         Expected utility, Honest & .12 & .07 & .09 \\
         \hline
         Expected utility, Recommender 2 Misreport & .04 & .17 & .04 \\ [1ex] 
         \hline
    \end{tabular}
    
\end{center}
\caption{ \label{tab:winklerfail} Perverse incentives with the Truncated Winkler mechanism in the liquidity-constrained setting.}
\end{table}

\subsection{The VCG Scoring Mechanism}

%

For the VCG scoring mechanism, we define the outcome-contingent payment to be that of a {\em constant scoring rule}, and
\begin{equation}
    s^{VCG}_{iq}(\hat{p}_{iq},o_q) = \begin{cases} 
      w_i, \mbox{if $o_q = 1$} \\
      0, \mbox{otherwise.}
   \end{cases}
\end{equation}

This scoring rule is trivially proper, but not strictly proper (it doesn't depend on the report). Interestingly, this will  provide strictly proper incentives when embedded within the framework of the VCG mechanism. 
%
It is useful to define the {\em value function} of a recommender for loan decisions $a\in \{0,1\}^m$ in the context of the VCG scoring mechanism as
\begin{align}
    v_i(a) & \triangleq \sum_{q \in M} a_q\times \EX_{o_q \sim p_{iq}} [s^{VCG}_{iq}(p_{iq},o_q)]=\sum_{q\in M}a_q w_ip_{iq}.
\end{align}
%
Similarly, we define the {\em reported value function} in the context of the VCG scoring mechanism as 
\begin{align}
    \hat{v}_i(a) & \triangleq \sum_{q \in M} a_q\times \EX_{o_q \sim \hat{p}_{iq}} [s^{VCG}_{iq}(\hat{p}_{iq},o_q)]=\sum_{q\in M}a_q w_i\hat{p}_{iq}.
\end{align}

These play the typical role of valuations and reported valuations in the analysis of the incentive properties of a VCG mechanism. 
A non-standard aspect  is that the weight $w_i$ is  under control of the mechanism designer because it reflects the payment made  in the event that a borrower makes a repayment (the outcome-contingent payments).
%
As the designer has direct control over this aspect of the value function, we can use these weights without using a weighted VCG mechanism. 

\if 0

Given reports $\hat{p}$, the decision rule of the mechanism is defined as:
\begin{align} \label{eq:14}
    x^(\hat{p}) &\in \argmax_{a\in \{0,1\}^m} \sum_{i \in N} \sum_{q \in M} \hat{v}_{iq}(a),\\
\mbox{s.t.} \quad &    \sum_{q\in M} a_q \leq K \notag
\end{align}

Through the use of the constant scoring rule, we have $\hat{v}_i(a) = \sum_{q\in a}w_i\hat{p}_{iq}$, and the

\fi

This  aligns the  allocation of loans that maximizes the total reported value with the allocation that lends to the borrowers who are best in the sense of having maximum total weighted  reported beliefs from recommenders, and thus are most likely to repay according to the lender's linear weighted belief aggregator.  That is, the value-maximizing allocation is just:
%
\begin{align} 
   \max_{a\in \{0,1\}^m}  & \sum_{q \in M} a_q \left(\sum_{i \in N}  w_i \hat{p}_{iq}\right)\\
    \mbox{s.t.} \quad &    \sum_{q\in M} a_q \leq K. \notag
\end{align}
%
%

To introduce a lending threshold $c>0$, we can also add $K$ imaginary {\em reserve borrowers} to the system and a {\em reserve recommender} with weight 1 who reports $c$ for a loan decision to each of these borrowers and 0 for other borrowers. 
%
%
We leave the weights to other recommenders unchanged. Adopting $R$ to represent the set of reserve borrowers, the modified allocation rule is
\begin{align} \label{eq:15}
    x^{VCG}(\hat{p}) &\in  \argmax_{a\in \{0,1\}^m}  \left[  \sum_{q \in M} a_q \left( \sum_{i \in N}  w_i \hat{p}_{iq}\right) + \sum_{q\in R}a_q c\right]\\
    \mbox{s.t.} \quad &    \sum_{q\in M\cup R} a_q \leq K. \notag
\end{align}

Going forward we will  incorporate  the reserve recommender into the set of $N$ recommenders 
and the $K$ reserve borrowers into the set $M$ of borrowers.
\begin{definition}[VCG scoring mechanism ($n$ recommenders, $m$ borrowers)]
The {\em VCG scoring mechanism} with possibly constrained lender liquidity, lender profit threshold $c$, and linear weighted belief aggregation $B_q$ with weights $w=(w_1,\ldots,w_n)$,  is defined as following:
\begin{itemize}
    \item Allocation: adopt $x^{VCG}(\hat{p})$
    \item Payment (no payments are made by or collected from the reserve recommender):
    \begin{itemize}
        \item Immediate payment: %
\begin{equation}
    t^{VCG}_i(\hat{p}_i, \hat{p}_{-i}) = \sum_{j \neq i} \hat{v}_j(x^{-i}(\hat{p}_{-i})) - \sum_{j \neq i} \hat{v}_j(x(\hat{p}_i,\hat{p}_{-i})),
\end{equation}
where $x^{-i}$ is the  allocation decision that would be made without $i$ present, i.e., ignoring the reports from recommender $i$. 
        \item Outcome-contingent payment: for each borrower that receives a loan, $s^{VCG}_{iq}(\hat{p}_{iq},o_q)=w_i$ if $o_q=1$ and $0$ otherwise.
    \end{itemize}
\end{itemize}
\end{definition}


%

The realized utility  of recommender $i$ after repayment outcomes are known is
\begin{equation}
    u_i(\hat{p}_i, \hat{p}_{-i}, o) \triangleq  \sum_{q \in x^{VCG}(\hat{p})} w_i o_q - t^{VCG}_i(\hat{p}_i, \hat{p}_{-i}).
\end{equation}

%


\subsection{Strict Properness of the VCG Scoring Mechanism}


\begin{theorem} \label{vcg_dsic_1}
    The VCG Scoring Mechanism is efficient,  satisfies weak ex post IC (ex post proper), and is ex post individually rational.
\end{theorem}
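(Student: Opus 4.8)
The plan is to reduce the claim to the textbook analysis of the Clarke--pivot VCG mechanism, exploiting the fact that $s^{VCG}_{iq}$ is a \emph{constant} scoring rule. The key preliminary observation is that $\EX_{o_q\sim p_{iq}}[s^{VCG}_{iq}(\hat p_{iq},o_q)] = w_i p_{iq}$ has no dependence on the report $\hat p_{iq}$, so $v_i(a)=\sum_{q\in M}a_q w_i p_{iq}$ is a genuine report-independent valuation of recommender $i$ over allocations $a\in\{0,1\}^m$, with $\hat v_i$ the corresponding reported valuation. Consequently the interim utility is exactly the standard VCG utility, $U_i(p_i,\hat p_i,\hat p_{-i}) = v_i(x^{VCG}(\hat p)) - t^{VCG}_i(\hat p)$. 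I would also record the bookkeeping that the reserve recommender is non-strategic and neither makes nor receives payments, so it only contributes the fixed terms $\sum_{q\in R}a_q c$ to the social objective and never appears in any real recommender's utility.

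\textbf{Efficiency.} By construction $x^{VCG}(\hat p)$ maximizes $\sum_{i\in N}\hat v_i(a)+\sum_{q\in R}a_q c$ subject to $\sum_{q\in M\cup R}a_q\le K$. Under truthful reports this objective equals $\sum_{q\in M}a_q B_q(p_q)+\sum_{q\in R}a_q c$, and an optimal solution puts loans on the (at most $K$) real borrowers with the largest $B_q(p_q)$ that strictly exceed $c$, filling any remaining slots with reserve borrowers; this is exactly the allocative-efficiency condition, up to the tie-breaking ambiguity when some $B_q(p_q)=c$.

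\textbf{Weak ex post IC and ex post IR.} Substituting $t^{VCG}_i$ gives
\[
U_i(p_i,\hat p_i,\hat p_{-i}) = \Bigl(v_i(x^{VCG}(\hat p)) + \sum_{j\ne i}\hat v_j(x^{VCG}(\hat p))\Bigr) - \sum_{j\ne i}\hat v_j(x^{-i}(\hat p_{-i})).
\]
The subtracted term does not depend on $\hat p_i$, so recommender $i$ only cares about the parenthesized term; reporting $\hat p_i=p_i$ makes $x^{VCG}(p_i,\hat p_{-i})$ the maximizer of $a\mapsto v_i(a)+\sum_{j\ne i}\hat v_j(a)$ over the feasible set, so no report does strictly better, which is ex post properness in the paper's language. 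For ex post IR, evaluate the same expression at $\hat p_i=p_i$: since the feasible set for $x^{VCG}(p_i,\hat p_{-i})$ contains $x^{-i}(\hat p_{-i})$ and $v_i(a)=\sum_q a_q w_i p_{iq}\ge 0$, we obtain $v_i(x^{VCG}(p_i,\hat p_{-i}))+\sum_{j\ne i}\hat v_j(x^{VCG}(p_i,\hat p_{-i})) \ge v_i(x^{-i}(\hat p_{-i}))+\sum_{j\ne i}\hat v_j(x^{-i}(\hat p_{-i})) \ge \sum_{j\ne i}\hat v_j(x^{-i}(\hat p_{-i}))$, hence $U_i(p_i,p_i,\hat p_{-i})\ge 0$.

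The individual steps are routine once the valuation reformulation is in place; the real work is making that reformulation airtight --- being explicit that it is the constancy of $s^{VCG}$ that removes the report-dependence of the expected outcome-contingent payment (so $v_i$ is truly a valuation rather than a function of $\hat p_i$), and handling the reserve-recommender/reserve-borrower bookkeeping, since efficiency is stated against $M$ with the strict threshold $c$ while the mechanism in fact optimizes over $M\cup R$ with reserve value exactly $c$. This reformulation is also what the remaining results (on strict properness) will build on.
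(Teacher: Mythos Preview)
Your proposal is correct and follows essentially the same textbook VCG/Clarke-pivot argument as the paper: both reduce to the observation that truthful reporting makes $x^{VCG}$ maximize $v_i(a)+\sum_{j\ne i}\hat v_j(a)$ and that this quantity dominates its value at $x^{-i}$, with $v_i\ge 0$ giving IR. Your presentation is in fact slightly cleaner---the paper argues IC by contradiction and handles IR via a decomposition into newly-allocated versus displaced borrowers, whereas you do both directly---and you are more explicit than the paper about the role of the constant scoring rule and the reserve bookkeeping; one small caveat is that the reserve recommender's value \emph{does} appear in a real recommender's utility through the $\sum_{j\ne i}\hat v_j$ terms in $t^{VCG}_i$, though this does not affect any of your inequalities.
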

 
 Once valuation functions are set-up to correspond to aggregate belief reports this proof follows the standard recipe for the IC and IR properties of a VCG mechanism ; see the Appendix.
 

We also want strict interim IC (strict properness), so that it is a unique best response of a recommender to report its true beliefs.

We define an \textit{equal-shift misreport} as a misreport $\hat{p}_i\neq p_i$ for which $p_{iq}-p_{iq'}=\hat{p}_{iq}-\hat{p}_{iq'}$ for every $q$, every $q'$. 
We say a mechanism is {\em strictly proper up to equal-shift misreports} if truthful reporting is a unique best response, maximizing interim utility except for possible tie-breaking amongst  equal-shift misreports.
%
%
%


 We now state the first of our two main theorems.
 %

\begin{theorem} \label{equal_shift_thm} \label{thm:1d}
For constrained liquidity $(K<m)$, two or more borrowers, three or more recommenders, and a belief distribution with full support, the VCG Scoring mechanism without a reserve (i.e., $c=0$) is strict interim IC (strictly proper) up to equal-shift misreports when $\max_i [w_i] < 1/2$.
\end{theorem}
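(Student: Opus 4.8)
The plan is to leverage the ex post incentive compatibility established in Theorem~\ref{vcg_dsic_1} — which already gives $\mathbb{E}_{p_{-i}\sim{\mathcal D}_{-i}}[U_i(p_i,p_i,p_{-i})]\ge\mathbb{E}_{p_{-i}\sim{\mathcal D}_{-i}}[U_i(p_i,\hat{p}_i,p_{-i})]$ for every report — and to upgrade it to a \emph{strict} inequality for every non-equal-shift $\hat{p}_i\neq p_i$ by exhibiting a positive-probability set of reports of others on which the misreport strictly lowers $i$'s expected utility. Fix $i$ and its (arbitrary) type $p_i$, let the others report truthfully, and note that under report $\hat{p}_i$ the \emph{reported} aggregate belief for borrower $q$ is $B_q(p_q)+\Delta_q$, where $\Delta_q:=w_i(\hat{p}_{iq}-p_{iq})$ and $B_q(p_q)=\sum_{j}w_jp_{jq}$ is the \emph{true} aggregate. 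Running the usual VCG algebra through the constant outcome-contingent payment $s^{VCG}_{iq}$ (exactly as in the proof of Theorem~\ref{vcg_dsic_1}) gives $U_i(p_i,\hat{p}_i,p_{-i})=\bigl(\sum_{q\in x^{VCG}(\hat{p}_i,p_{-i})}B_q(p_q)\bigr)-C(p_{-i})$ with $C(p_{-i})$ independent of $\hat{p}_i$; since truthful reporting makes $x^{VCG}$ the feasible allocation that maximizes $\sum_{q\in a}B_q(p_q)$ (here $c=0$, so the reserve machinery is inert and the rule just takes the top $K$ borrowers by reported aggregate), it is enough to show that with positive probability $x^{VCG}(\hat{p}_i,p_{-i})$ fails to be a maximizer, equivalently (once we intersect with the generic event that the $B_q(p_q)$ are distinct) that it differs from the true top-$K$ set. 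An equal-shift misreport makes $\Delta_q$ constant in $q$, translating all reported scores equally and leaving the top-$K$ set — hence $U_i$ — unchanged; this is the excepted case. So assume $\hat{p}_i$ is not an equal-shift misreport, and pick $q_1\in\argmax_q\Delta_q$, $q_2\in\argmin_q\Delta_q$, so that $\delta:=\Delta_{q_1}-\Delta_{q_2}>0$ and $q_1\ne q_2$.

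To build the positive-probability event, the first step is a geometric observation that is precisely where the hypothesis $\max_i w_i<1/2$ is used. As $p_{-i}$ ranges over its (full) support, borrower $q$'s true aggregate $B_q(p_q)=w_ip_{iq}+\sum_{j\neq i}w_jp_{jq}$ ranges over $[w_ip_{iq},\,w_ip_{iq}+1-w_i]$, an interval of width $1-w_i$; the left endpoints of these $m$ intervals span a range of length $w_i(\max_qp_{iq}-\min_qp_{iq})\le w_i<1-w_i$, strictly less than their common width, so all $m$ intervals share a common open subinterval $J$. Thus the other recommenders can drive \emph{every} borrower's true score anywhere inside $J$, and in particular realize any prescribed ordering of the scores within $J$. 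I would then pick any $H\subseteq M\setminus\{q_1,q_2\}$ with $|H|=K-1$ and set $L:=M\setminus(H\cup\{q_1,q_2\})$ (feasible since $1\le K\le m-1$ and $m\ge2$), choose pairwise-disjoint open subintervals $\{I_q\}_{q\in M}$ of $J$ ordered so that every $I_q$ with $q\in H$ lies above $I_{q_2}$, which lies above $I_{q_1}$, which lies above every $I_q$ with $q\in L$, and in addition $\sup I_{q_2}-\inf I_{q_1}<\delta$. The event $E=\{p_{-i}:B_q(p_q)\in I_q\ \forall q\}$ is a nonempty product of open sets (the constraints for distinct borrowers involve disjoint coordinates $(p_{jq})_{j\neq i}$), hence has positive probability by full support; deleting a measure-zero set leaves a positive-probability event $E'$ on which the $B_q(p_q)$ are distinct and the reported scores $B_q(p_q)+\Delta_q$ are distinct.

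On $E'$, truthful reporting allocates to $H\cup\{q_2\}$, so $q_1$ receives nothing. Under the misreport, $q_1$ overtakes $q_2$ because $\bigl(B_{q_1}(p_{q_1})+\Delta_{q_1}\bigr)-\bigl(B_{q_2}(p_{q_2})+\Delta_{q_2}\bigr)=\bigl(B_{q_1}(p_{q_1})-B_{q_2}(p_{q_2})\bigr)+\delta>-\delta+\delta=0$, and $q_1$ stays above every $q\in L$ because $\bigl(B_{q_1}(p_{q_1})+\Delta_{q_1}\bigr)-\bigl(B_q(p_q)+\Delta_q\bigr)=\bigl(B_{q_1}(p_{q_1})-B_q(p_q)\bigr)+(\Delta_{q_1}-\Delta_q)>0$ (first term positive, second nonnegative since $\Delta_{q_1}$ is maximal). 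Hence at most the $K-1$ borrowers in $H$ can outscore $q_1$ under the misreport, so $q_1$ is allocated, and therefore $x^{VCG}(\hat{p}_i,p_{-i})$ differs from the unique true-welfare-maximal set $H\cup\{q_2\}$; by the VCG identity $U_i(p_i,\hat{p}_i,p_{-i})<U_i(p_i,p_i,p_{-i})$ on $E'$. Combined with the weak inequality off $E'$, this gives the strict inequality in expectation, for all $i$, all $p_i$, and all non-equal-shift $\hat{p}_i\ne p_i$ (and efficiency and ex post IR are already available from Theorem~\ref{vcg_dsic_1}).

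The main obstacle I anticipate is the step that produces the positive-probability event: recommender $i$'s type is fixed and potentially adversarial, so it can force some borrowers' aggregate scores to be high and others low, and a priori this could prevent the others from manufacturing the needed near-tie between $q_1$ and $q_2$ at the allocation boundary. The two ideas that resolve it — and that I would foreground — are (i) the interval-overlap argument, which converts $\max_i w_i<1/2$ into the statement that the others can still push all scores into a common band and order them arbitrarily, and (ii) taking $q_1,q_2$ to have the extreme values of $\Delta_q$, which makes the ``$q_1$ overtakes everything below it'' step automatic and, crucially, removes any need for the gaps separating $H$, $\{q_1,q_2\}$ and $L$ to survive $i$'s perturbation; demanding that robustness would cost a factor of two and yield only $\max_i w_i<1/4$. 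The remaining details — the genericity cleanup, the inertness of reserves when $c=0$, the consistency remark that $n\ge3$ is exactly what makes $\max_iw_i<1/2$ possible given $\sum_iw_i=1$, and the vacuous cases $m=1$ or $K=0$ — are routine.
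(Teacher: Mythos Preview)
Your proof is correct and follows essentially the same strategy as the paper: use $w_i<1/2$ to construct a positive-probability configuration of others' beliefs on which the misreport changes the top-$K$ allocation away from the unique true optimum, then combine with weak ex post IC from Theorem~\ref{vcg_dsic_1} to get the strict interim inequality. Your choice of $q_1,q_2$ as the extremal-$\Delta_q$ borrowers and the interval-overlap framing are clean refinements---the paper instead picks an arbitrary pair with unequal shift, pivots the construction around $B_q=1/2$, and then runs a three-case analysis of how the allocation moves---but the underlying mechanism is identical.
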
 
\begin{proof}
Three or more recommenders are required for $\max_{i'} [w_{i'}]<1/2$. Two or more borrowers allows for constrained liquidity.
We consider recommender $i$, belief $p_i$, any $\hat{p}_i\neq p_i$ that is not an equal-shift misreport, and establish a non-zero measure on the  beliefs $p_{-i}$ of others such that the allocation changes in a way that reduces the total value (i.e., not selecting the borrowers with the top $K$ aggregate belief of repayment).
 Since VCG is weakly ex post IC (Theorem~\ref{vcg_dsic_1}), this establishes strict interim IC  up to equal-shift misreports.
 
For any $q$, let $B(q)$ denote the aggregate belief on $q$ at  $p_i$ and $\hat{B}(q)$ at report $\hat{p}_i$. If $p_{-i}$ satisfies $p_{jq} = \frac{1/2 - w_i p_{iq}}{\sum_{j' \neq i} w_{j'}} =p^*_{q}$, $\forall j \neq i$, then $B(q) = 1/2$.
This belief $p_{-i}$ is feasible by full support, and since for $p_{iq}=0$ we have $p^*_{q}=(1/2)/\sum_{j'\neq i}w_{j'}<1$  since $\sum_{j'\neq i}>1/2$ from $w_i<1/2$. For $p_{iq}=1$ we have $p^*_{q}=(1/2-w_i)/\sum_{j'\neq i}w_{j'}>0$ since $w_i<1/2$.
  
For a non equal-shift misreport, there are borrowers $q$ and $q'$, such that $p_{iq}-p_{iq'} = \hat{p}_{iq}-\hat{p}_{iq'} + \epsilon$, for  $\epsilon > 0$; i.e.,  with the relatively disadvantaged borrower labeled $q$. 
Consider a profile $p_{-i}$ that satisfies the following properties:

\begin{enumerate}
\item $K$ borrowers, including borrower $q$, are allocated:
\begin{itemize}
    \item For $q''\neq q$, set $p_{jq''}\in (p^*_{q''},1]$, for $j\neq i$, such that $B(q'')>1/2$, where this belief of others is feasible since $p^*_{q''}<1$.
    \item For borrower $q$, set $p_{jq}\in (p^*_q,\min(1,p^*_q+\frac{1}{\sum_{j'\neq i}w_{j'}}\frac{w_i\epsilon}{2}))$, so that $B(q)\in (\frac{1}{2},\frac{1}{2}+\frac{w_i\epsilon}{2})$, where this belief of others is feasible since $p^*_q<1$.
\end{itemize}
\item $K-m$ borrowers, including borrower $q'$, are not allocated.

\begin{itemize}
    \item For $q''\neq q'$, set $p_{jq''}\in [0, p^*_{q''})$, for $j\neq i$, such that $B(q'')<1/2$, where this belief of others is feasible since $p^*_{q''}>0$.
    \item For borrower $q'$, set $p_{jq'}\in (\max(0,p^*_{q'}-\frac{1}{\sum_{j'\neq i}w_{j'}}\frac{w_i\epsilon}{2})),p^*_{q'})$, so that $B(q')\in (\frac{1}{2}-\frac{w_i\epsilon}{2}, \frac{1}{2})$, where this belief of others is feasible since $p^*_{q'}>0$. 
    \end{itemize}
    \end{enumerate}

\if 0

    \item For any $\mathrm{max}(m - (K + 1), 0)$ borrowers $q'' \neq q, q'$, $p_{jq''} \in \left[0, \frac{1/2 - w_i p_{iq''}}{\sum_{j \neq i} w_j} \right) \quad \forall j \neq i$
    
    \item For any other $K - 1$ borrowers $q''' \neq q, q'$, $p_{jq'''} \in \left(\frac{1/2 - w_i p_{iq'''}}{\sum_{j \neq i} w_j}, 1 \right] \quad \forall j \neq i$
    
    \item $p_{jq} \in \left(\frac{1/2 - w_i p_{iq}}{\sum_{j \neq i} w_j}, \mathrm{min} \left( \frac{1/2 - w_i p_{iq} + \frac{w_i \epsilon}{2}}{\sum_{j \neq i} w_j}, 1 \right) \right) \quad \forall j \neq i$
    
    \item $p_{jq'} \in \left(\mathrm{max} \left( \frac{1/2 - w_i p_{iq'} - \frac{w_i \epsilon}{2}}{\sum_{j \neq i} w_j}, 0 \right), \frac{1/2 - w_i p_{iq'}}{\sum_{j \neq i} w_j} \right) \quad \forall j \neq i$

\fi

There is a non-zero measure on beliefs $p_{-i}$ satisfying these  properties by the full support assumption. 
\if 0

First, we know that the above reports $p_{jq}, p_{jq'}, p_{jq''}, \ \mathrm{and} \ p_{jq'''}$ are feasible and occur with non-zero probability as $p_{jq} = \frac{1/2 - w_i p_{iq}}{\sum_{j \neq i} w_j}$ is on the interval $(0,1)$, $\frac{w_i \epsilon}{2}$ is strictly positive, and others' beliefs have full support. When $i$ reports truthfully, we have that

\begin{itemize}
    \item All $m - K - 1$ borrowers $q''$ have $B(q'') < 1/2$ and are not allocated
    \item All $K - 1$ borrowers $q'''$ have $B(q''') > 1/2$ and are allocated
    \item $B(q) \in (1/2, 1/2 + \frac{w_i \epsilon}{2})$, so $q$ is allocated
    \item $B(q') \in (1/2 - \frac{w_i \epsilon}{2}, 1/2)$, so $q'$ is not allocated
\end{itemize}
\fi
For any such $p_{-i}$, at misreport $\hat{p}_i$  we have  $\hat{B}(q') > \hat{B}(q)$,  since  $\hat{B}(q') - B(q') = \hat{B}(q) - B(q) + w_i \epsilon$ and $B(q) - B(q') < w_i \epsilon$. By the monotonicity of the VCG allocation rule, this implies one of the following at this misreport:
\begin{enumerate}
    \item Borrower $q'$ but not $q$ is allocated, which is an outcome with lower total value  since $B(q')<B(q)$.
    \item Neither $q$ nor $q'$ are allocated, which is an outcome with lower total value since $B(q)>1/2$ and only $K-1$ other borrowers $q''$ have true aggregate belief $B(q'')>1/2$.
    \item Both $q$ and $q'$ are allocated, which is an outcome  with lower total value since $B(q')<1/2$ while $K$ borrowers $q''$ (including $q$) have aggregate belief $B(q'')>1/2$.  
\end{enumerate}

\if 0
\begin{enumerate}
    \item $\hat{B}(q') > \mathrm{max}(B(q''))$ and $\hat{B}(q) < \mathrm{min}(B(q'''))$ meaning that $q$ is not allocated and $q'$ is allocated
    \item $\hat{B}(q), \hat{B}(q') > \mathrm{min}(B(q''')$ Both $q$ and $q'$ are allocated, with one borrower $q'''$ not being allocated
    \item $\hat{B}(q), \hat{B}(q') < \mathrm{max}(B(q''))$Neither $q$ nor $q'$ are allocated, with one borrower $q''$ instead being allocated
\end{enumerate}

Thus, with non-zero probability, $i$'s misreport will shift the allocation to borrower $q'$, $q''$, or $q'''$, providing strictly less value to the system overall. Thus, per Theorem \ref{vcg_dsic_1}, the VCG scoring mechanism is interim strictly IC.
\fi
 This completes the proof.
\if 0

All misreports, with the exception of equal-shift misreports, result in one or more situations where $\hat{v}_{iq} - v_{iq} = \hat{v}_{ir} - v_{ir} + \epsilon$. That is, the misreport causes an uneven shift in reported value for two borrowers. Since at least 50\% of the report weight is unknown to recommender $i$ at the time of reporting, from $i$'s interim beliefs, it is possible that the final ranking of borrowers could be in any order, and that $\mathbb{P}[\sum_{i \in N} \hat{v}_{ir} \geq \sum_{i \in N} \hat{v}_{iq}] > 0 \; \forall q,r \in M$. To see this, consider that in the two-recommender equal weights case, the second recommender $j$ could create all ties by reporting $\hat{v}_{jq} = 1 - \hat{v}_{iq} \; \forall q \in M$. Only when $\hat{v}_{iq} - \hat{v}_{ir} \in \{-1,1\}$ is it impossible for this inequality to be strict, and random tie breaking means that any ordering is possible in this situation. Given the continuity of real numbers, $\mathbb{P}[\sum_i \hat{v}_{ir} - \sum_i \hat{v}_{iq} < \epsilon] > 0 \; \forall \epsilon > 0$.

Given that any ordering between ratings is possible from $i$'s perspective at the time of reporting, it is possible that if $i$ reports truthfully, borrower $r$ is ranked $K$ and borrower $q$ is ranked $K+1$ with $\sum_i \hat{v}_{ir} - \sum_i \hat{v}_{iq} < \epsilon$. Thus, if $i$ instead misreports as above, this would add $\hat{v}_{iq} - v_{iq} = \hat{v}_{ir} - v_{ir} + \epsilon$ to $q$'s total reports and $\hat{v}_{ir} - v_{ir}$ to $r$'s total reports, creating a shift of $\epsilon$ in $q$'s favor and switching the allocation from $r$ to $q$.

Per the VCG Scoring Mechanism, the change to $i$'s expected utility $U_i(p_i, \hat{p}_i, \hat{p}_{-i})$ from the allocation switch is $v_{iq} - v_{ir} + \sum_{j \neq i} \hat{v}_{jq} - \sum_{j \neq i} \hat{v}_{jr}$. Since $r$ is ranked higher than $q$ when $i$ reports truthfully, $\sum_{j \neq i} \hat{v}_{jr} + v_{ir} > \sum_{j \neq i} \hat{v}_{jq} + v_{iq} \Rightarrow v_{iq} - v_{ir} + \sum_{j \neq i} \hat{v}_{jq} - \sum_{j \neq i} \hat{v}_{jr} < 0$, meaning that the allocation switch strictly reduces $i$'s  expected utility. Since any misreport will cause this to happen with non-zero probability, this completes our proof.
\fi
$\blacksquare$
\end{proof}

We now state the second main theorem.
\begin{theorem}
\label{thm:2d}
 For possibly constrained liquidity $(K\leq m)$, one or more borrowers,  three or more recommenders, and a belief distribution with full support, the VCG Scoring mechanism with a lender profit threshold $c$, with $0<c<1$, is strict interim IC (strictly proper) when $\max_i [w_i] < \min(1-c, c)$ (which requires $n> 1/\min(1-c,c)$ recommenders).
\end{theorem}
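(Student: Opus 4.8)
The plan is to follow the blueprint of Theorem~\ref{thm:1d}, but now to exploit the extra structure supplied by the $K$ reserve borrowers: each reserve borrower sits at aggregate belief exactly $c$, an \emph{absolute} reference level rather than a merely relative one, and this is precisely what removes the ``up to equal-shift misreports'' caveat and brings the single-borrower case ($m=1$) inside the statement. As in Theorem~\ref{thm:1d}, I would start from the fact that the VCG scoring mechanism is weakly ex post IC (Theorem~\ref{vcg_dsic_1}): when the other recommenders report truthfully, recommender $i$'s interim utility from a report $\hat p_i$ equals a constant independent of $\hat p_i$ plus $\mathbb{E}_{p_{-i}\sim\mathcal{D}_{-i}}$ of the \emph{true} total value $\sum_{j\in N}v_j(x^{VCG}(\hat p_i,p_{-i}))$ of the resulting allocation (the sum over $N$ includes the reserve recommender, whose value counts $c$ per allocated reserve borrower, with $M,N$ already absorbing the reserve borrowers and recommender). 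Truthful reporting makes $x^{VCG}$ maximize this quantity, so it suffices to exhibit, for every $\hat p_i\neq p_i$, a non-zero-measure set of profiles $p_{-i}$ on which $x^{VCG}(\hat p_i,p_{-i})$ has strictly smaller true total value than $x^{VCG}(p_i,p_{-i})$.

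The enabling observation, where both halves of $\max_i w_i<\min(1-c,c)$ are used, is a ``tuning'' lemma: for any real borrower $q$, by choosing the other (non-reserve) recommenders' reports on $q$ — whose weights sum to $1-w_i$ — the aggregate belief $B(q)$ can be placed anywhere in the interval $[\,w_i p_{iq},\; w_i p_{iq}+(1-w_i)\,]$, whatever recommender $i$'s belief $p_{iq}$; and this interval contains $c$ in its interior for every $p_{iq}$, since $w_i<c$ gives $w_i p_{iq}\le w_i<c$ and $w_i<1-c$ gives $w_i p_{iq}+(1-w_i)\ge 1-w_i>c$. Each choice of the others' reports realizing a target value of $B(q)$ corresponds to an open set in the $q$-coordinate of $p_{-i}$, hence has positive measure under full support. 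Applying the same bound with $p_{iq}$ replaced by $\max(p_{iq},\hat p_{iq})$ shows I can also keep a borrower \emph{robustly out}: force both its truthful and its misreported aggregate belief strictly below $c$, since $w_i\max(p_{iq},\hat p_{iq})\le w_i<c$ leaves slack.

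With the tuning lemma in hand the argument splits on the form of $\hat p_i\neq p_i$. \emph{Case (a):} $\hat p_i$ is not an equal-shift misreport, so there are borrowers $q,q'$ and $\epsilon>0$ with $(p_{iq}-\hat p_{iq})-(p_{iq'}-\hat p_{iq'})=\epsilon$. Choose $p_{-i}$ so that $B(q)\in(c,\;c+w_i\epsilon/4)$, $B(q')\in(c-w_i\epsilon/4,\;c)$, and every other real borrower is robustly out. At the truthful report the value-maximizing (hence VCG) allocation gives loans to $q$ and to $K-1$ reserve borrowers. Under $\hat p_i$ one has $\hat B(q)-\hat B(q')=(B(q)-B(q'))-w_i\epsilon<-w_i\epsilon/2<0$, so by monotonicity of the top-$K$ VCG allocation exactly one of the following occurs, each of which (after accounting for the compensating reserve allocations) strictly lowers the true total value: $q'$ gets a loan and $q$ does not (loss $B(q)-B(q')>0$); both get loans while one fewer reserve does, or, when $K=1$, only $q'$ does (loss $c-B(q')>0$, respectively $B(q)-B(q')>0$); or neither gets a loan while one more reserve does (loss $B(q)-c>0$). \emph{Case (b):} $\hat p_i$ is an equal-shift misreport, $\hat p_{iq}=p_{iq}+\delta$ for all $q$ with $\delta\neq 0$. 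Pick a single real borrower $q$, keep every other real borrower robustly out, and place $B(q)$ within $w_i|\delta|$ of $c$ on the side toward which $\delta$ pushes it (just below $c$ if $\delta>0$, just above if $\delta<0$), so that $\hat B(q)=B(q)+w_i\delta$ crosses $c$; then the loan to $q$ appears or disappears relative to the reserves, changing the true total value by $-|B(q)-c|<0$. Since each construction carries positive $p_{-i}$-measure, taking expectations and invoking weak ex post IC yields the strict inequality; note $\max_i w_i<\min(1-c,c)\le 1/2$ forces $n>1/\min(1-c,c)$, hence $n\ge 3$, recommenders.

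The step I expect to be the crux is establishing that these constructions are genuinely \emph{robust} — that recommender $i$'s deviation cannot inadvertently push some \emph{other} borrower across $c$ and thereby trigger a second allocation change that cancels (or outweighs) the loss just described. This is exactly what the ``robustly out'' device prevents, and it is here that the hypothesis $w_i<c$ (not merely $w_i<1$) does essential work: without it, recommender $i$'s report alone could drag a borrower's aggregate belief above the threshold, so no choice of the others' reports would keep that borrower reliably unallocated. A secondary point needing care is the VCG bookkeeping when $K$ is small — the ``both allocated'' sub-case degenerating at $K=1$ — which alters the accounting but not the conclusion that the true total value strictly falls.
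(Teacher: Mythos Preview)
Your proposal is correct but takes a more elaborate route than the paper. You carry over the equal-shift/non-equal-shift dichotomy from Theorem~\ref{thm:1d}, handling non-equal-shift misreports with a two-borrower construction (your Case~a) and equal-shift misreports with a single-borrower-versus-reserve construction (your Case~b). The paper instead observes that once reserves are present the single-borrower construction is already universal: pick \emph{any} coordinate $q$ with $\hat p_{iq}\neq p_{iq}$, place $B(q)$ within $w_i|p_{iq}-\hat p_{iq}|$ of $c$ on the appropriate side (just above if $\hat p_{iq}<p_{iq}$, just below if $\hat p_{iq}>p_{iq}$), and set at least $m-K$ other real borrowers below $c$ at the truthful report. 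The misreport then flips $q$ across the threshold, and since $q$ is strictly in (respectively, strictly out of) every value-maximizing allocation at true beliefs, any allocation with the wrong status for $q$ has strictly lower true total value---\emph{regardless} of what the misreport does to the remaining borrowers. This last point also shows that your ``robustly out'' device, while sound, is stronger than needed: the paper controls only the other borrowers' \emph{true} aggregate beliefs, not their misreported ones, and lets the optimality of the truthful VCG allocation absorb whatever secondary allocation changes the deviation induces. In effect your Case~(b) already \emph{is} the paper's whole argument; Case~(a) becomes redundant once you notice that the Case~(b) construction applies to every misreport, not only equal shifts. What your route buys is a more explicit accounting of the allocation at the misreport; what the paper's buys is brevity and a cleaner explanation of why the equal-shift caveat of Theorem~\ref{thm:1d} disappears.
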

\begin{proof}
We need three or more recommenders because  $\min(1-c,c)\leq 1/2$, and thus $n>1/(1/2)=2$.
We consider recommender $i$, belief $p_i$, any $\hat{p}_i\neq p_i$, and establish a non-zero measure on the beliefs $p_{-i}$ of others such that the allocation changes in a way that reduces the total value (i.e., not selecting the top borrowers amongst those with aggregate belief at least $c$).  Since VCG is weakly ex post IC (Theorem~\ref{vcg_dsic_1}), this establishes strict interim IC  up to equal-shift misreports.
 
For any $q$, let $B(q)$ denote the aggregate belief on $q$ at  $p_i$ and $\hat{B}(q)$ at report $\hat{p}_i$.
If $p_{-i}$ satisfies $p_{jq} = \frac{c - w_i p_{iq}}{\sum_{j' \neq i} w_{j'}}=p^*_q$,  $\forall j \neq i$, then $B(q) = c$. This belief $p_{-i}$ is feasible by full support, and since for $p_{iq}=0$ we have $p^*_q=c/\sum_{j'\neq i}w_{j'}<1$ since $\sum_{j'\neq i}w_{j'}>c$ from $w_i<1-c$. For $p_{iq}=1$, $p^*_q=(c-w_i)/\sum_{j'\neq i}w_{j'}>0$ since $w_i<c$.

For misreport $\hat{p}_i$, consider borrower $q$ with $\hat{p}_{iq}\neq p_{iq}$.

\textbf{(Case 1: $\hat{p}_{iq}<p_{iq}$)} Let $\hat{p}_{iq} = p_{iq} - \epsilon$, some $\epsilon>0$.
Consider a profile $p_{-i}$ that satisfies the following properties:
\begin{enumerate}
    \item $B(q)\in (c,c+w_i\epsilon)$, by setting $p_{jq}\in (p^*_q, \min(1,p^*_q+\frac{1}{\sum_{j'\neq i}w_{j'}}w_i\epsilon))$, all $j\neq i$, where this belief of others is feasible since $p^*_q<1$ and $c<1$.
    \item At least  $m-K \ (\geq 0)$ other borrowers $q'\neq q$ have $B(q')<c$, by setting $p_{jq'}\in [0,p^*_{q'})$, all $j\neq i$, where this belief of others is feasible since $p^*_{q'}>0$.
\end{enumerate}

There is a non-zero measure on beliefs $p_{-i}$ satisfying these properties by the full support assumption. Given (1) and (2), at true beliefs we have borrower $q$  allocated since $B(q)>c$ and at least $m-K$ others cannot be allocated,  so $q$ is in the top $K$ of those with aggregate belief above the threshold $c$. 
For any such $p_{-i}$, at misreport $\hat{p}_i$ we have $\hat{B}(q)=B(q)-w_i\epsilon<c$, since $B(q)\in (c,c+w_i\epsilon)$ and $\hat{p}_{iq} = p_{iq} - \epsilon$. This implies that $q$ is not allocated, resulting in an outcome with lower total value since $q$ was in the top $K$ and with true aggregate belief above the threshold.

\if 0

\begin{enumerate}
\item $p_{jq} \in \left( \frac{c-w_i p_{iq}}{\sum_{j\neq i}w_j}, \frac{c-w_i p_{iq} + w_i \epsilon}{\sum_{j\neq i}w_j} \right)$ for all $j\neq i$. Then we have $B(q) \in (c, c + w_i \epsilon)$, since $inf(B(q)) = w_i p_{iq} + \sum_{j \neq i} w_j \frac{c-w_i p_{iq}}{\sum_{j\neq i}w_j} = w_i p_{iq} + c - w_i p_{iq} = c$, and $sup(B(q)) = w_i p_{iq} + \sum_{j \neq i} w_j \frac{c-w_i p_{iq + w_i \epsilon}}{\sum_{j\neq i}w_j} = w_i p_{iq} + c - w_i p_{iq} + w_i \epsilon = c + w_i \epsilon$. This entire range for $p_{jq}$ is feasible, as we showed that $\frac{c - w_i p_{iq}}{\sum_{j \neq i} w_j}$ is feasible above, and $\frac{c-w_i p_{iq} + w_i \epsilon}{\sum_{j\neq i}w_j} = \frac{c - w_i \hat{p}_{iq}}{\sum_{j \neq i} w_j}$.

\item At least $\max(m-(K+1),0)$ other borrowers $q'' \neq q$ have $p_{jq} \in \left[ 0, \frac{c - w_i p_{iq}}{\sum_{j \neq i} w_j} \right) \ \forall j \neq i$. Thus $B(q'') < c \ \forall q''$, and none of the $q''$ are allocated, leaving at least 1 available slot for $q$ to receive a loan.
\end{enumerate}

Given (1) and (2), we have that $q$ is allocated at true report $p_i$. Moreover, since $\hat{B}(q) = B(q) - w_i \epsilon <c$, we have that $q$ is not allocated at the misreport and the allocation changes.

\fi

\textbf{(Case 2: $\hat{p}_{iq}>p_{iq}$)}  Let $\hat{p}_{iq} = p_{iq} + \epsilon$, some $\epsilon>0$.

Consider a profile $p_{-i}$ that satisfies the following properties:
\begin{enumerate}
\item $B(q) \in (c - w_i \epsilon, c)$, by setting $p_{jq}\in (\max(0,p^*_q-\frac{1}{\sum_{j'\neq i}w_{j'}}w_i\epsilon),p^*_q)$, all $j\neq i$, where this belief of others is feasible since $p^*_q>0$ and $c>0$.
\item At least  $m-K \ (\geq 0)$ other borrowers $q'\neq q$ have $B(q')<c$, by setting $p_{jq'}\in [0,p^*_{q'})$, all $j\neq i$, where this belief of others is feasible since $p^*_{q'}>0$. 
\end{enumerate}

Given (1), at true beliefs  borrower $q$ is not allocated. At misreport $\hat{p}_i$,
we have $\hat{B}(q) = B(q)+w_i\epsilon> c$, since $B(q) \in (c - w_i \epsilon, c)$ and $\hat{p}_{iq} = p_{iq} + \epsilon$.
This implies one of the following at this misreport:
\begin{enumerate}
    \item Borrower $q$ is allocated, resulting in an outcome with lower total value since the true aggregate belief on $q$ is below the threshold (that is, by causing $q$ to be allocated, $i$ displaces a reserve borrower $q''$ with $B(q'') = c$, and $i$ must pay this difference to the system).
    \item If $q$ is not allocated, then since $\hat{B}(q)>c$ there must be $K$ others allocated, by the definition of the VCG outcome rule. At least $m-K$ others have $B(q')<c$, and thus at most $(m-1)-(m-K)=K-1$ others have $B(q')\geq c$. This means that at least one other borrower with $B(q')<c$ is allocated, and the outcome has lower total value.  
\end{enumerate}

This completes the proof.
$\blacksquare$
\end{proof}

\if 0
\dcp{check this} From Theorem \ref{equal_shift_thm}, the mechanism is proper in the absence of equal-shift misreports. Since recommenders cannot rate the reserve borrowers, an equal-shift misreport is not possible when there is a reserve. Thus when recommender $i$ misreports on any borrower, this will create a situation where $\hat{p}_{iq} - p_{iq} = \hat{p}_{ir} - p_{ir} + \epsilon$, and as above $\mathbb{P}[\sum_i \hat{p}_{iq} - \sum_i \hat{p}_{ir} < \epsilon] > 0 \; \forall \epsilon > 0$. We show below that $w_i$ is sufficiently small that the allocation function $x$ has full support given $i$'s ignorance of other agents' reports.

To see this, consider the two worst cases. If $c$ is close to $1$ and $\hat{p}_{iq} = 0$, it is still possible for all other recommenders to rate $q$ at 1, leading to an average rating for $q$ of $\sum_{j \neq i} w_j = c$, so it is possible for borrower $q$ to be allocated despite $i$'s low rating. If $c$ is close to $0$ and $\hat{p}_{iq} = 1$, it is still possible for all other recommenders to rate $q$ at 0, leading to an average rating for $q$ of $c$, once again maintaining full support for $x$.

With full support, we leverage the fact from above that $\mathbb{P}[\sum_i \hat{p}_{iq} - \sum_i \hat{p}_{ir} < \epsilon] > 0 \; \forall \epsilon > 0$, and the proof of strictness goes through as before. 
\fi


\subsection{Strong Ex post IR}



We can also  achieve strong ex post IR by ensuring that the immediate payment {\bf by} each agent is weakly negative and noting that the outcome-contingent payments  {\bf to} each recommender are weakly-positive. 
For this, define $\mathit{tcomp}_i(\hat{p}_{-i})$ as the worst-case immediate payment in VCG given reports of others. This quantity is independent of the recommender's own report. At the same time,  we introduce a multiplier $\alpha>0$ to the outcome-contingent payments, so that $s_{iq}^{aVCG}(\hat{p}_{iq},o_q)=\alpha_iw_i$ if $o_q=1$, and 0 otherwise.  Neither change affects the incentive analysis.
\if 0

Specifically for the truncated Winkler mechanism, we this becomes

\begin{equation}
    \mathit{tcomp}_i = - \mathrm{argmin}_{\hat{p}_i, o} \sum_{q \in M} s(\hat{p}_{iq}, c_{iq}, o_q)
\end{equation}

\fi
%
Modifying the definition of reported valuations  accordingly, for example with $\hat{v}_i(a)=\sum_{q\in M}a_q \alpha w_i\hat{p}_{iq}$, we have
\begin{equation}
    \mathit{tcomp}_i(\hat{p}_{-i}) =  \mathrm{max}_{\hat{p}_i} \left( \sum_{j \neq i} \hat{v}_j(x_{-i}(\hat{p}_{-i})) - \sum_{j \neq i} \hat{v}_j(x^*(\hat{p}_i,\hat{p}_{-i})) \right).
\end{equation}


\if 0
With this, we have ex post utility  
\begin{equation}
    u_i(\hat{p}_i,\hat{p}_{-i},o) = \alpha_i \left( \sum_{q \in x^*(\hat{p})} w_io_q - t_i(\hat{p}) + \mathit{tcomp}_i(\hat{p}_{-i}) \right).
\end{equation}

Taken together, we refer to this as the {\em rescaled VCG scoring mechanism}.

\fi

We refer to this as the {\em rescaled VCG scoring mechanism}. By {\em worst-case deficit} we mean the worst-case, total payment made by the mechanism to the agents, considering both the immediate payments and the outcome-contingent payments. 
\begin{theorem}
In the possibly constrained liquidity setting, and with multiple recommenders and multiple borrowers, there is some value of $\alpha_0>0$ such that for any $\alpha<\alpha_0$  the rescaled VCG scoring mechanism is strong ex-post IR, has worst-case deficit at most $\epsilon>0$, and strict proper incentives as stated in Theorems~\ref{thm:1d} and~\ref{thm:2d}. 
\end{theorem}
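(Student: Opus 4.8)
The plan is to verify the three asserted properties one at a time, exploiting that both changes relative to the plain VCG scoring mechanism — the additive term $\mathit{tcomp}_i(\hat{p}_{-i})$ in the immediate payment and the positive multiplier $\alpha$ on the outcome-contingent payments — act on a recommender's payoff either as a positive rescaling or as a shift that is constant in $\hat{p}_i$. I would first record this formally: under the rescaled mechanism $U_i(p_i,\hat{p}_i,\hat{p}_{-i})$ equals $\alpha$ times $U_i$ under the plain VCG scoring mechanism plus a quantity that depends only on $\hat{p}_{-i}$, so the set of interim-utility maximizers in $\hat{p}_i$ is unchanged. Hence strict interim IC (up to equal-shift misreports in the $c=0$ case of Theorem~\ref{thm:1d}, and fully when $0<c<1$ as in Theorem~\ref{thm:2d}, where a reserve rules equal-shift misreports out) transfers verbatim, for every $\alpha>0$.

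Next, strong ex post IR. Writing the rescaled immediate payment of $i$ as $t_i^{aVCG}(\hat{p}) = t_i^{VCG,\alpha}(\hat{p}) - \mathit{tcomp}_i(\hat{p}_{-i})$, where $t_i^{VCG,\alpha}$ is the Clarke-pivot term for the $\alpha$-scaled reported valuations $\hat{v}_i(a)=\sum_{q\in M}a_q\alpha w_i\hat{p}_{iq}$: the pivot term satisfies $t_i^{VCG,\alpha}(\hat{p})\ge 0$, while by definition $\mathit{tcomp}_i(\hat{p}_{-i}) = \max_{\hat{p}_i} t_i^{VCG,\alpha}(\hat{p}_i,\hat{p}_{-i}) \ge t_i^{VCG,\alpha}(\hat{p})$, so $t_i^{aVCG}(\hat{p})\le 0$; every recommender receives rather than pays money up front, whatever others report. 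Since the outcome-contingent payments $s_{iq}^{aVCG}(\hat{p}_{iq},o_q)=\alpha w_i o_q\ge 0$ are non-negative for every outcome, the realized utility $u_i(p_i,\hat{p}_{-i},o)=\sum_{q\in x^{VCG}(\hat{p})}\alpha w_i o_q - t_i^{aVCG}(\hat{p})\ge 0$ for all $p_i$, all $\hat{p}_{-i}$, and all $o$. This is strong ex post IR, and it holds for every $\alpha>0$.

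It remains to bound the worst-case deficit, the only place where $\alpha$ must be taken small. The total amount the mechanism ever transfers to the recommenders is $\sum_i\bigl(\sum_{q\in x^{VCG}(\hat{p})}\alpha w_i o_q + \mathit{tcomp}_i(\hat{p}_{-i}) - t_i^{VCG,\alpha}(\hat{p})\bigr)$, and dropping the non-negative collected pivot terms $t_i^{VCG,\alpha}(\hat{p})\ge 0$ only enlarges this. The outcome-contingent part sums to at most $\alpha\sum_{q\in x}\sum_i w_i\le\alpha K$, using $o_q\le 1$, $\sum_i w_i=1$, and that at most $K$ borrowers are allocated; and $\mathit{tcomp}_i(\hat{p}_{-i})\le\alpha K$ because it is a maximum (over $i$'s report) of the total $\alpha$-scaled reported value of the other recommenders on an allocation of at most $K$ borrowers, where the per-borrower contribution is at most $\alpha$ (the reserve recommender and reserve borrowers contribute to this value but neither make nor receive payments). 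Summing over the at most $n$ paying recommenders gives a worst-case deficit of at most $\alpha K(n+1)$, uniformly over reports and outcomes; taking $\alpha_0=\epsilon/\bigl(K(n+1)\bigr)$ yields deficit at most $\epsilon$ for every $\alpha<\alpha_0$, while preserving the strong ex post IR and strict properness already established.

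The only non-routine point — the hard part — is obtaining a genuinely report-independent upper bound on $\mathit{tcomp}_i$ and handling the reserve recommender and reserve borrowers correctly, since they enter the allocation value (and hence $\mathit{tcomp}_i$) but are outside the payment scheme; once that uniform bound is pinned down the choice of $\alpha_0$ is immediate, and the incentive and IR parts are essentially free from the two observations above.
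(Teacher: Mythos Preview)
Your proof is correct and follows essentially the same three-step approach as the paper: strict properness via the observation that positive rescaling plus a $\hat{p}_i$-independent shift preserves the interim argmax; strong ex post IR via $\mathit{tcomp}_i$ dominating the Clarke pivot and the non-negativity of the outcome-contingent payments; and the deficit bound via the linear dependence of all transfers on $\alpha$. The paper's own proof states these same three points more tersely, invoking ``linearity, recognizing that $\alpha$ scales all payments'' for the deficit without computing an explicit constant; your derivation of $\alpha_0=\epsilon/\bigl(K(n+1)\bigr)$ is additional detail the paper omits, but the underlying argument is the same.
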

\begin{proof}
For  strong ex post IR, this follows  from the definition of $\mathit{tcomp}_i(\hat{p}_{-i})$ and outcome-contingent payments being non-negative. For the strict properness, this follows from the invariance of incentive analysis to scaling payments by any $\alpha>0$ and  that 
$\mathit{tcomp}_i(\hat{p}_{-i})$ is independent of recommender $i$'s reports.
The claim of deficit smaller than $\epsilon$ for any $\alpha<\alpha_0$, for some $\alpha_0>0$ follows from linearity, recognizing that $\alpha$ scales all payments.
\end{proof}
\if 0
\begin{proof}
We gain strong ex-post IR because $\mathit{tcomp}_i(\hat{p}_{-i})$ ensures that the worst-possible immediate payment for an agent is zero. \dcp{comment that outcome-contingent payment is non negative?} We retain the properness and strict-properness guarantees above by manipulating the definition of properness where $p_i \neq \hat{p}_i$.
\begin{equation}
    S_i(p_i) \geq S_i(\hat{p}_i) \Rightarrow \alpha_i(S_i(p_i) + \mathit{tcomp}_i(\hat{p}_{-i})) \geq \alpha_i(S_i(\hat{p}_i) + \mathit{tcomp}_i(\hat{p}_{-i}),
\end{equation}
where the above holds for strict properness when the inequality is strict. Lastly, since the above holds for any $alpha_i > 0$, we can make our agent payout budget arbitrarily small, as long as it remains positive.
$\blacksquare$
\end{proof}
\fi

\if 0

%

    We can also quantify the {\em deficit} in the mechanism, which is the largest possible total payments that need to be made to recommenders. \dcp{following needs to be modified to use $w_i$, also $n$ is $|N|$ and $m$ is $|M|$} 
\begin{theorem}
    The worst-case  deficit in the mechanism (sum of payments made to recommenders) is $K|N|$
\end{theorem}
\begin{proof} The cost of recommender compensation to the lender is $||o|| |N| - \sum_{i \in N} t_i(\hat{p}_i)$. The first term is maximized when all of the $K$ borrowers repay, and this would provide a payment of 1 to each of the $|N|$ recommenders for a total cost to the lender of $K|N|$. When all recommenders agree in their recommendations, no one recommender will change the allocation, and $t_i(\hat{p}_i) = 0$. We know that this is the payments are non-negative, as a negative payment would mean that 

\begin{equation}
    \sum_{j \neq i} \langle \hat{p}_j, x^{-i} \rangle < \sum_{j \neq i} \langle \hat{p}_j, x^* \rangle
\end{equation}

but this is impossible, as the allocation function sets $x^{-i}$ to maximize the expected score of all recommenders other than $i$. Thus, the worst-case loss for the lender is $K|N|$.
$\blacksquare$
\end{proof}

\fi

\subsection{Incentive Alignment with Better Reporting Quality}

We show in this section that a recommender  in the VCG scoring mechanism strictly improves its  utility by having a higher weight. This nicely aligns incentives in the broader ecosystem, similar to the way in which advertisers with higher quality make smaller payments in the generalized second pricing of internet advertising, and provides an incentive for a recommender to improve its reporting quality and thus attain a higher weight over time 
in the aggregation rule.
%
%
\begin{theorem} \label{increasing_weights}
Whatever the reports of others, for any recommender $i$, increasing the weight $w_i$ to $w_i'>w_i$, fixing the weights of others,  increases the  utility $U_i(p_i,p_i,\hat{p}_{-i})$ to the recommender from truthful participation in the VCG scoring mechanism. 
\end{theorem}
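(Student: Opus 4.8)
The plan is to write the interim utility under truthful reporting as a pointwise maximum of functions each non-decreasing in $w_i$, and then read off monotonicity. Write $x = x^{VCG}(p_i,\hat p_{-i})$ and $x^{-i} = x^{-i}(\hat p_{-i})$. Under truthful reporting $\hat v_i = v_i$, so by the definition of the VCG scoring mechanism,
\[
U_i(p_i,p_i,\hat p_{-i}) = v_i(x) - t^{VCG}_i(p_i,\hat p_{-i}) = \Big[\, v_i(x) + \sum_{j\neq i}\hat v_j(x) \,\Big] - \sum_{j\neq i}\hat v_j(x^{-i}).
\]
Since $x$ is by definition an allocation maximizing $\sum_{k\in N} \hat v_k(a)$ over feasible $a$, and since $\hat v_i = v_i$ under truthful reporting, the bracketed term equals $\max_a\big[\, v_i(a)+\sum_{j\neq i}\hat v_j(a)\,\big]$.

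Next I would note that the subtracted pivot term $\sum_{j\neq i}\hat v_j(x^{-i})$ is independent of $w_i$: both the allocation $x^{-i}$ and the valuations $\hat v_j$, $j \neq i$ (including the reserve recommender, whose weight is the fixed constant $1$), are functions only of $\hat p_{-i}$ and the weights $\{w_j\}_{j\neq i}$, all held fixed. Writing $v_i(a) = w_i f_i(a)$ with $f_i(a) \triangleq \sum_{q\in M} a_q p_{iq} \ge 0$ (non-negative since $p_{iq}\ge 0$), and $g(a) \triangleq \sum_{j\neq i}\hat v_j(a)$, we obtain
\[
U_i(p_i,p_i,\hat p_{-i}) \;=\; \max_a\big[\, w_i f_i(a) + g(a) \,\big] \;-\; C,
\]
where $g$ and the constant $C = \sum_{j\neq i}\hat v_j(x^{-i})$ do not depend on $w_i$.

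The conclusion is then immediate: for each feasible $a$ the map $w_i \mapsto w_i f_i(a) + g(a)$ is affine with slope $f_i(a) \ge 0$, hence non-decreasing, and a pointwise maximum of non-decreasing functions is non-decreasing. Quantitatively, if $a^\star$ attains the maximum at weight $w_i$, then for any $w_i' > w_i$,
\[
U_i\big|_{w_i'} \;\ge\; w_i' f_i(a^\star) + g(a^\star) - C \;=\; U_i\big|_{w_i} + (w_i'-w_i)\,f_i(a^\star) \;\ge\; U_i\big|_{w_i},
\]
a strict increase whenever the optimal allocation funds at least one borrower $q$ with $p_{iq} > 0$ — the generic case, failing only in the degenerate situation where recommender $i$ assigns repayment probability $0$ to every funded borrower.

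I do not expect a genuine obstacle; this is an envelope-type monotonicity argument. The only point needing care is verifying that the pivot term hides no $w_i$-dependence — it does not, precisely because it is evaluated with recommender $i$ removed — and noting that the argument uses neither the full-support hypothesis nor the weight bounds of Theorems~\ref{thm:1d} and~\ref{thm:2d}: monotonicity in $w_i$ holds pointwise in the reports of others and for arbitrary weights, relying only on $p_{iq}\ge 0$ and the structure of the VCG payments.
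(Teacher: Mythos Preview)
Your argument is correct and follows essentially the same route as the paper: both write $U_i = \big[v_i(x)+\sum_{j\neq i}\hat v_j(x)\big]-\sum_{j\neq i}\hat v_j(x^{-i})$, observe that the pivot term is independent of $w_i$, and then exploit optimality of $x$ to compare the bracketed term at $w_i$ and $w_i'$. Your envelope formulation (max of affine non-decreasing functions) is a slightly cleaner packaging of the same two inequalities the paper chains together, and you are in fact more careful than the paper in flagging that strictness fails in the degenerate case $f_i(a^\star)=0$.
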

\begin{proof}
The utility to recommender $i$ is 
$$
U_i(p_i,p_i,\hat{p}_{-i}) = v_i(x^{VCG}(p))+ \sum_{j \neq i} \hat{v}_j(x^{VCG}(p_i,\hat{p}_{-i}))-\sum_{j \neq i} \hat{v}_j(x^{-i}(\hat{p}_{-i})), 
$$
where $x^{-i}$ is the  allocation decision that would be made in the VCG scoring mechanism without $i$.
Let $v_i$ and $v'_i$ denote the recommender's valuation for weight $w_i$ and $w'_i$, respectively.
The third term does not depend on its weight. Consider the first two terms, and let $a$ and $a'$ denote the allocation for $w_i$ and $w'_i>w_i$, respectively. We have $v'_i(a')+\sum_{j\neq i}\hat{v}_j(a')\geq v'_i(a)+\sum_{j\neq i}\hat{v}_j(a)> 
v_i(a)+\sum_{j\neq i}\hat{v}_j(a)$.  The first inequality holds trivially when $a'=a$, and if $a'\neq a$ then by the optimizing property of the VCG allocation rule. The second inequality holds  since  $v'_i(a) = \sum_q a_q w'_i p_{iq} > \sum_q a_q w_i p_{iq} = v_i(a)$, and since  reported values of others are unchanged.
\if 0

\dcp{check} An agent's weight affects its utility in two ways: by scaling up the portion of the utility payment $u_i$ based on repayment outcomes $o_j$, and by changing the allocation $x^*$. For a fixed allocation $x^*$, we have a fixed $t_i$, meaning that $w_{i2} > w_{i1} \Rightarrow w_{i2} \sum_{j \in M} o_j - t_i > w_{i1} \sum_{j \in M} o_j - t_i$.

Let us denote the allocation under $w_{i1}$ as $x^{*1}$ and under $w_{i2}$ as $x^{*2}$. If $w_{i2}$ changes the allocation relative to $w_{i1}$, the expected score for $i$ for the borrowers allocated in both $x^{*1}$ and $x^{*2}$ increases per above, and the net expected score change for $i$ due to reallocating loans from $x^{*1} \setminus x^{*2}$ to $x^{*2} \setminus x^{*1}$ is

\begin{equation} \label{wtd_vcg_switch_util}
    \left( \sum_{q \in x^{*2}}w_2 v(x^{*2}) - \sum_{q \in x^{*1}} w_2 v(x^{*1}) \right) - \left( \sum_{j \neq i} w_j v(x^{*2}) - \sum_{j \neq i} w_j v(x^{*1}) \right)
\end{equation}

Given the argmax term in the definition of $x^*$, we have that

\begin{equation}
    \sum_{q \in x^{*2}}w_2 v(x^{*2}) + \sum_{j \neq i} w_j v(x^{*2}) >  \sum_{q \in x^{*1}} w_2 v(x^{*1}) + \sum_{j \neq i} w_j v(x^{*1})
\end{equation}

and rearranging terms, we have

\begin{equation}
    \left( \sum_{q \in x^{*2}}w_2 v(x^{*2}) - \sum_{q \in x^{*1}} w_2 v(x^{*1}) \right) > \left( \sum_{j \neq i} w_j v(x^{*1}) - \sum_{j \neq i} w_j v(x^{*2}) \right)
\end{equation}

which means that eq. \ref{wtd_vcg_switch_util} above is positive, and the reallocation effect from increasing weight increases $i$'s expected utility.
\fi
$\blacksquare$
\end{proof}

\subsection{Relation to Chen et al.'s Impossibility Result}

Theorem 2 of~\citet{ChenYilingDMwG} states that a decision market, which uses belief reports as reflected in market prices to make a decision, is  strictly proper if and only if the decision is randomized and the distribution has full support. 
That is, for strict properness every decision must be taken with non-zero probability. 
The key difference between our model and that of \citet{ChenYilingDMwG} is that our agents report their beliefs simultaneously without awareness of the reports of others against whom they will be judged. This creates interim uncertainty about the allocation given the  common prior $\mathcal{D}$ and the technical conditions stated in Theorems~\ref{thm:1d} and~\ref{thm:2d}. Whereas~\citet{ChenYilingDMwG} have certainty about belief aggregation and thus need the decision rule to be randomized will full support, we have interim uncertainty about belief aggregation and the decision rule can be deterministic as a function of reports (still providing what is in effect full {\em interim} support) and thus strict properness.


\if 0
In the truncated Winkler logarithmic scoring rule, the rule is strictly proper when loans occur, but it is weakly proper when the recommender rates below the threshold and no outcome is observed. This makes the overall rule proper, but not strictly proper, which is consistent Chen et al.'s result.
\dcp{isn't a better way to handle this to talk about how we get strict proper for interim uncertainty?}

The VCG Scoring mechanism is not truly strictly proper. When the mechanism has a reserve threshold $c \in (0,1)$, $D_{-i}$ has full support and $|N|$ is sufficiently large, then  it is strictly proper, thus satisfying Chen et al.'s full support condition of full support. \dcp{i don't think we should say it has full support. rather, it has interim full support. this is very different! it's about interim beliefs}

Despite the limited strict properness of the VCG Scoring Mechanism, the properness is useful. \dcp{better to focus on the case where it is indeed strict IC?}

\dcp{i don't think this is  the right place for the following discussion; maybe put at the end of the weak incentives discussion?}

Misreports which change allocations are disadvantageous. To see this, suppose that $i$ reports $\hat{p}_{iq} \neq p_{iq}$ for a borrower $q$ which changes the allocation. If the allocation change means that $q$ now receives a loan and $r$ does not, then $i$'s payment for this borrower becomes

\begin{equation}
    t_i(\hat{p}_{iq}) = \sum_{j \neq i} \hat{p}_{jr} - \sum_{j \neq i} \hat{p}_{jq}
\end{equation}

This is greater than the value to $i$ of switching borrower $q$ for borrower $r$ because if we had $p_{iq} - p_{ir} > \sum_{j \neq i} \hat{p}_{jr} - \sum_{j \neq i} \hat{p}_{jq}$, then we would have $p_{iq} + \sum_{j \neq i} \hat{p}_{jq} > p_{ir} + \sum_{j \neq i} \hat{p}_{jr}$, and $q$ would be allocated in the truthful case. We can make an analogous case for a misreport which causes a borrower $q$ not to receive a loan when truthful reporting would have.

\fi

\if 0

\subsection{Outcomes of the VCG Scoring Mechanism in Special Cases}
In the sufficient-liquidity case, we set the number of loans limit $K = M$, where $M$ is the total number of potential borrowers, and we use our $K$ reserve borrowers and our reserve recommender as above. This is nearly strictly proper and proper in the cases described above. Note that in the case of one recommender, this will be proper, but not strictly proper, as 100\% of the recommendations are known to the one recommender. As noted above, though, the recommender will be disincentivized from making any misreports which change the loan allocation.

\fi



\if 0
A simple linear average of reports has been widely shown to outperform individual experts in many contexts in a phenomena known as wisdom of the crowds \citep{MannesAlbertE2014TWoS}. However, in most contexts expert quality will vary significantly, meaning that aggregating expert reports based on quality will improve prediction accuracy \citep{SouleDavidAHfC, SatopaaVilleA2017PifM, MichaelD.Lee2014UCMt, Wang19}.

Expert aggregation schemes can be divided into linear \citep{BudescuDavidV2015IEtE,SouleDavidAHfC, WinklerRobertL1981CPDf} and non-linear \citep{RaykarVikasC2010LFC, Zhou, VenanziMatteo2014Cbam, WelinderPeter2010OcRa, KimJoonyoung2020RAMf, KargerDavidR2013Ecfm} approaches. They can be also be divided into approaches which weight based on previous prediction quality, e.g. \citep{BudescuDavidV2015IEtE} and ones which exploit structure in participants’ reports to provide ratings even before any outcomes have been realized, e.g. \citep{RaykarVikasC2010LFC}. Most of these latter approaches build on the expectation maximization work of Dawid-Skene \citep{DawidA.P1979MLEo}, or on peer prediction rules. We focus here on linear approaches which weight based on prediction quality for realized outcomes.
\fi

\if 0

\subsection{Weighted VCG Scoring}
We use the above Budescu weights to create weighted VCG mechanism. Unlike typical weighted VCG mechanisms (e.g. \citep{AshlagiItai2012CVar}), which scale the payment function with a $\frac{1}{w_i}$ term, we instead scale up the utility by $w_i$. Unlike auctions in which agents' valuations for items is fixed, we are able to scale agents' utility payments. Our formulation is therefore the typical weighted VCG scaled up by $w_i$.

\begin{itemize}
    \item Each recommender $i$ is assigned a weight $w_i \geq 0$ which is fixed and known
    
    \item $v_i(x) = \sum_{q \in x} p_{iq}$
    
    \item $x^*(v) \in \mathrm{argmax}_{x} \left[ \sum_{i \in N} w_i v_i(x) \right]$
    
    \item $t_i = \sum_{j \neq i} w_j v(x^*_{-i}) - \sum_{j \neq i} w_j v(x^*)$
    
    \item $u_i = w_i \sum_{q \in M} o_q - t_i$
\end{itemize}

\fi

\if 0

\begin{theorem}
The weighted VCG mechanism is DSIC.
\end{theorem}

\textit{Proof:}
Following the structure of the proof for Theorem \ref{vcg_dsic_1}, suppose by way of contradiction that recommender $i$ has a useful misreport $\hat{p}_i \neq p_i$ such that $\EX[u_i(\hat{p}_i, \hat{p}_{-i}, o)] > \EX[u_i(p_i, \hat{p}_{-i}, o)]$. Let $x^*$ be the allocation when $i$ is truthful and $x'$ be the allocation when $i$ under $i$'s useful misreport. Since the new expected utility is strictly greater, we have

\begin{equation}
    w_i \hat{v}_i(x') - \sum_{j \neq i} w_j \hat{v}_{j}(x^{-i}) + \sum_{j \neq i} w_j \hat{v}_j(x') > w_i \hat{v}_i(x^*) - \sum_{j \neq i} w_j \hat{v}_{j}(x^{-i}) + \sum_{j \neq i} w_j \hat{v}_j(x^*)
\end{equation}

\begin{equation}
    \Rightarrow w_i \hat{v}_i(x') + \sum_{j \neq i} w_j \hat{v}_j(x') > w_i \hat{v}_i(x^*) + \sum_{j \neq i} w_j \hat{v}_j(x^*)
\end{equation}

But this is a contradiction, as the social choice function should then have allocated $x'$ to maximize recommender value. Note that this does not depend on the reports of others, whether they are being truthful, or even whether their reports are known. Thus, there can be no useful misreport, and the mechanism is weakly DSIC.

Note that the above will hold in a single round for any set of negative weights. However, as established in Theorem \ref{increasing_weights}, higher weights increase expected utility, so agents with a multi-round view have incentive increase their weights.

\fi

\subsection{Linear Belief Aggregation}
As shown in theorems ~\ref{thm:1d} and~\ref{thm:2d} above, the VCG scoring mechanism is Interim IC when $max_{i \in N} (w_i)$ is sufficiently small. There are many linear aggregators in the literature which can be adjusted to satisfy this condition. In the Appendix, we describe the linear aggregator by \citep{BudescuDavidV2015IEtE}, which is based on reports from previous rounds and therefore doesn't affect current-round incentives.



\if 0
\subsection{Experimental results}

We ran an initial simulation to test the practical effects of the quality of recommender knowledge, various lending thresholds $c$, and the number of recommenders $|N|$. For the below results, we used the truncated Winkler quadratic scoring rule as described in section 2. We ran the system with 10 potential borrowers whose loan repayment probability $q$ is distributed uniformly at random on the interval $[0,1]$. Recommender knowledge was an affine combination of true repayment probability and a number drawn uniformly at random as $\psi \in [0,1]$. The recommenders' belief was modeled as $\alpha q + (1-\alpha) \psi$ for $\alpha \in [0,1]$. We ran the system 100 times under each setting for consistency.

While preliminary, our results consistently show repayment rates of .7 or higher on borrowers selected from a population with an expected repayment rate of 0.5. With $\alpha = .5$ and $|N| = 4$, raising the lending threshold $c$ decreases loans given, but increases repayment from .76 when $c=.5$ to .99 when $c=.9$. Fixing $c=.7$ and $\alpha=.75$, increasing the number of recommenders from 1 to 4 increases the repayment rate somewhat as shown in figure 6 below, with minimal improvement beyond 4 recommenders. While the mechanism is individually rational, inaccurate recommender knowledge may cause recommenders to receive negative compensation. As shown in figure 7, the probability of this decreases considerably from 1 to 4 recommenders, then slightly as the number of recommenders increases.

\begin{figure}
    \begin{center}
    	\includegraphics[width=9 cm]{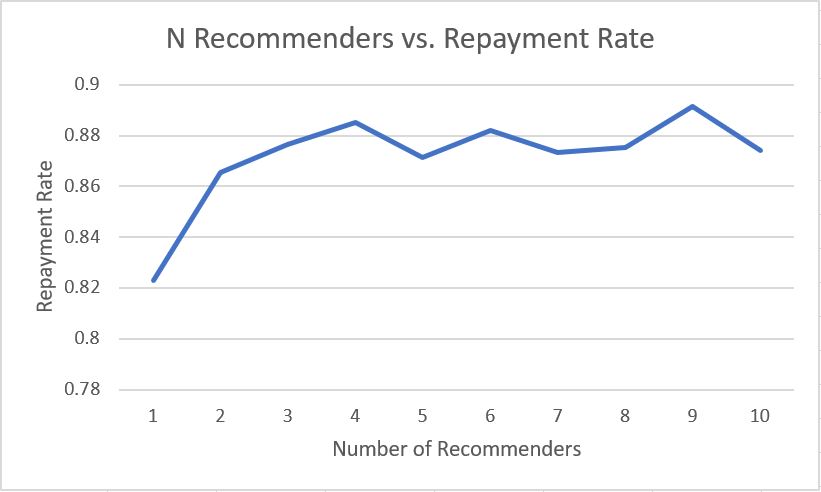} \\
        \caption{Repayment rate vs. number of recommenders}
    \end{center}
\end{figure}

\begin{figure}
    \begin{center}
    	\includegraphics[width=9 cm]{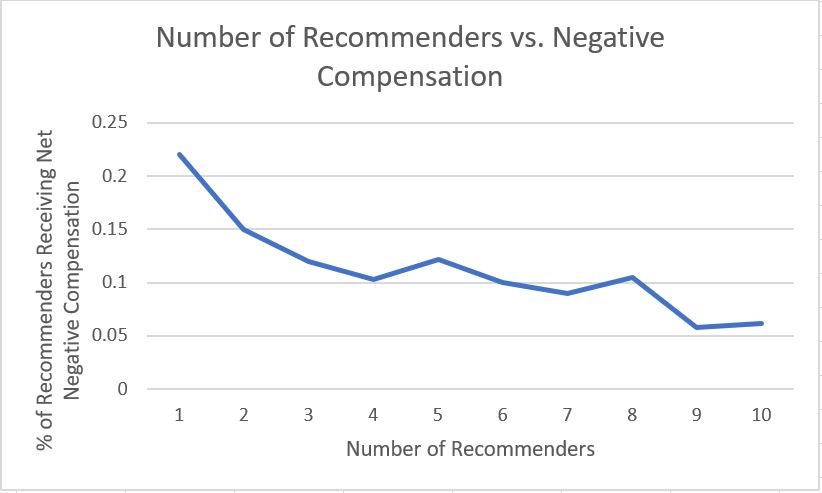} \\
        \caption{Fraction of recommenders receiving negative repayment in a given round with 10 potential borrowers}
    \end{center}
\end{figure}

\fi

\section{Conclusion} \label{conclusion_section}
Creditworthiness detection remains an unsolved problem, and our forumlation of the problem as an elicitation mechanism gathering community beliefs brings a novel source of information and a rich body of mechanism design literature to the task. In so doing, we developed a class of truncated, asymmetric scoring rules that are strict ex-post proper in the sufficient-liquidity case. We also connected the scoring rule and VCG literatures in a novel way, creating the \textit{VCG scoring mechanism} through which we construct agents' values via scoring rules. These values then feed into a VCG allocation and payment mechanism that is strictly interim IIC (and thus strictly proper) in both liquidity-constrained  and liquidity-unconstrained settings, and with or without a lender profit threshold.
 Given  impossibility results in the analagous case of decision scoring rules~\citep{ChenYilingDMwG}, these results expand the range of settings in which information can be elicited and paid for based on outcomes by carefully leveraging agents' interim uncertainty. We also connected  these mechanisms with the belief aggregation literature, allowing us to retain incentive compatibility properties with linear aggregation techniques.

Much future work remains. Lenders have suggested we incorporate considerations of collusion, where recommenders may misreport to help friends. Further work remains to ensure optimal aggregation. Bayesian and maximum likelihood estimator techniques could optimize the weighting of unknown agents and potential borrowers, incorporating factors such as demographics or other background information. Online-learning models could be built for specific use cases, such as optimizing the exploration and exploitation tradeoff for lenders building client bases in new communities. Also considering the required payments to motivate  recommenders to invest appropriately in providing good information is an  area that we have seen as important from an ongoing field study in Uganda. Finally, additional field work should continue in parallel to ensure that the theory focuses on the most critical problems which crop up in practice.

\bibliographystyle{plainnat}
\bibliography{refs}

\clearpage

\section[A]{Appendix}

\subsection{Brief Aggregation Literature Review}

 Linear  belief aggregators have been  shown to outperform individual experts in  a phenomena known as ``wisdom of the crowds"~\citep{MannesAlbertE2014TWoS}. Moreover, in many contexts expert quality will vary significantly, and aggregating expert reports based on quality  improves prediction accuracy \citep{SouleDavidAHfC, SatopaaVilleA2017PifM, MichaelD.Lee2014UCMt, Wang19}. We also make use of belief aggregation methods, which can be divided into linear \citep{BudescuDavidV2015IEtE,SouleDavidAHfC, WinklerRobertL1981CPDf} and non-linear \citep{RaykarVikasC2010LFC, Zhou, VenanziMatteo2014Cbam, WelinderPeter2010OcRa, KimJoonyoung2020RAMf, KargerDavidR2013Ecfm} approaches. They can be also be divided into approaches that weight based on previous prediction quality, e.g. \citet{BudescuDavidV2015IEtE} and ones that exploit structure in participants’ reports to provide ratings even before any outcomes have been realized, e.g. \citet{RaykarVikasC2010LFC}. Many of these latter approaches build on the expectation-maximization work of \citet{DawidA.P1979MLEo}. Any aggregation scheme that is a monotonically-increasing function of the reports will work for the truncated Winkler rule of Section \ref{winkler_section}. However, the aggregation must be a linear combination to work with the VCG Scoring mechanism of Section \ref{vcg_section}. As such, we focus here on linear aggregators that weight based on prediction quality for realized outcomes, where quality measures are derived from past performance.

%

\subsection{Proof of Theorem \ref{vcg_dsic_1}}
\begin{proof} Suppose by way of contradiction that recommender $i$ has a useful misreport $\hat{p}_i \neq p_i$ such that $\EX[u_i(\hat{p}_i, \hat{p}_{-i}, o)] > \EX[u_i(p_i, \hat{p}_{-i}, o)]$. Let $x'$ be the allocation under $i$'s misreport. Since $i$'s expected utility is strictly greater under $x'$ than $x$, we have

\begin{equation}
    v_i(x') - \sum_{j \neq i} \hat{v}_{j}(x^{-i}) + \sum_{j \neq i} \hat{v}_j(x') > v_i(x) - \sum_{j \neq i} \hat{v}_{j}(x^{-i}) + \sum_{j \neq i} \hat{v}_j(x)
\end{equation}

\begin{equation}
    \Rightarrow v_i(x') + \sum_{j \neq i} \hat{v}_j(x') > v_i(x^*) + \sum_{j \neq i} \hat{v}_j(x^*)
\end{equation}

But this is a contradiction, as the social choice function should then have allocated $x'$ to maximize recommender value. Note that this does not depend on the reports of others, whether they are being truthful, or even whether their reports are known. Thus, there can be no useful misreport, and the mechanism satisfies weak IC.

Efficiency, which is the property of a mechanism choosing the allocation that maximizes sum of value for all agents, follows from the fact that the allocation function explicitly maximizes the value of the aggregation function.

For IR, consider recommender $i$, consider the borrowers $x_{new}$ who are allocated 
when $i$ is present but not otherwise and $x_{old}$ who are not allocated when $i$ is  present but
 allocated otherwise. For other  borrowers, they contribute a non-negative utility to $i$, as
 they do not affect $t^{VCG}$ and the outcome-contingent payments are non-negative. 
 For the borrowers in $x^{new}$, expected utility is $\EX_{o \sim p_i}[u_i(p_i, \hat{p}_{-i}, o)] = v_i(x^{new}) - \sum_{j \neq i} v_j(x^{old}) + \sum_{j \neq i} v_j(x^{new})$. This term is also non-negative, because if it were negative, then the allocation function would have allocated to $x^{old}$ instead of $x^{new}$. Therefore, each recommender's expected utility is non-negative, and the mechanism is individually-rational.
$\blacksquare$
\end{proof}

\subsection{Linear Belief Aggregation Example}
Following the lead of~\citet{BudescuDavidV2015IEtE}, we calculate the weights to associate with a recommender based on the reports and outcomes from previous rounds.
First, we calculate a score $Q$ when considering the system of all recommenders 
 and the set $M^*$ of all borrowers from previous rounds that have received loans on which outcomes are available.
\begin{equation}
    Q = a + b \sum_{q\in M^*} \left( \sum_{r\in \{0,1\}} (o_{qr} - m_{qr})^2 \right),
\end{equation}
where  $m_{q1}$ is the {\em average (unweighted) reported belief that borrower $q$ will repay the loan 
across all recommenders} and $m_{q0} = 1 - m_{q1}$. It is convenient to set $a=100$ and $b=-50$, leading to $0\leq Q\leq 100$.
Define $Q_{-i}$ to be the analogous quantity, but where the average reported belief is calculated over those reports from recommenders $j\neq i$. Given this, we define the {\em accuracy contribution} $C_i=(Q-Q_{-i})/|M^*|$. 
\if 0

to the accuracy of the system is obtained by calculating the score $S$ when $i$ is present and $S_{-i}$ when $i$ is not present. Note that $M_i \subseteq M$ is the subset of potential borrowers who received loans when $i$ reported.

\begin{equation}
    C_i = \frac{\sum_{j=1}^{M_i}(S - S_{-i})}{|M_i|}
\end{equation}

\fi
The weight $w_i$ is defined to be zero for a recommender with $C_i\leq 0$. For the set of recommenders $N_+ = \{i\in N: C_i>0\}$, the weight $w_i$ for $i\in N_+$ is set in proportion to $C_i$ and normalized such that  $\sum_i w_i = 1$.

\end{document}